\newtheorem{thm}{Theorem}[section]   
\newtheorem{cor}[thm]{Corollary}
\newtheorem{question}[thm]{Question}
\newtheorem{lemma}[thm]{Lemma}
\newtheorem{prop}[thm]{Proposition}
\newtheorem{example}[thm]{Example}
\newtheorem{defn}[thm]{Definition}
\newtheorem{rem}[thm]{Remark}
\def\supp{\operatorname{Supp}}
\def\min{\operatorname{min}}
\def\max{\operatorname{max}}
\def\c1{\operatorname{c_1}}
\def\c2{\operatorname{c_2}}
\def\PP{{\mathbf P}}
\def\A{{\mathcal A}}
\def\C{{\mathcal C}}
\def\D{{\mathcal D}}
\def\+{\oplus}                   
\def\*{\otimes}                  
\def\Fq{\mathbb{F}_{q}}
\def\AA{\mathbb{A}}
\def\PP{\mathbb{P}}
\def\SS{\Sigma}
\def\P{\mathcal{P}}
\def\a{\alpha}
\def\Fq{\mathbb{F}_q}
\def\ev{\mathrm{ev}}
\begin{document}
\title[Codes from symmetric polynomials]{Codes from symmetric polynomials}

\author[Datta]{Mrinmoy Datta}
\address{Department of Mathematics 
 Indian Institute of Technology Hyderabad  \newline \indent 
}
\email{mrinmoy.datta@math.iith.ac.in} 
\thanks{}

\author[Johnsen]{Trygve Johnsen}
\address{Department of Mathematics and Statistics, 
 UiT-The Arctic University of Norway  
N-9037 Troms{\o}, Norway}
\email{trygve.johnsen@uit.no}
\thanks{The first named author is partially supported by a seed grant from Indian Institute of Technology Hyderabad. Both authors have been partially supported by grant 280731 from the Research Council of Norway,  and  by the project ``Pure Mathematics in Norway" through the Trond Mohn Foundation and Troms{\o} Research Foundation.}

\subjclass{05E45, 94B05, 05B35, 13F55}
\date{\today}

\begin{abstract}
We define and study a class of Reed-Muller type error-correcting codes obtained from elementary symmetric functions in finitely many variables. We determine the code parameters and higher weight spectra in the simplest cases.
\end{abstract}

\maketitle

\section{Introduction}

Over the last decades, good examples of error-correcting codes have been constructed  using algebraic geometric techniques. The codes constructed this way are linear codes over a given finite field $k$, where each member of a finite dimensional vector space of functions, say $V$,  are evaluated at a finite set of points, say $S$, all lying in an affine space or a projective space over the same field $k$. Examples are simplex codes,  Reed-Muller codes (\cite{W}), algebraic-geometric codes with $S$ a curve (Goppa codes) (\cite{G}) or a higher dimensional variety (\cite{BDH}), Grassmann codes (\cite{N}), and codes where the points in question  represent 
synmmetric or skew-symmetric matrices (\cite{BS}).

Having defined such codes, it is imperative that one looks for their parameters such as dimensions, minimum distance, weight distributions, generalized Hamming weights etc. These questions are often related to question that are interesting from the perspective of algebraic geometry, number theory and various branches of discrete mathematics. For instance, one checks easily that the minimum distance of a code defined using methods described above is equivalent to determining the maximum possible number of zeroes that a function in $V$ (that does not vanish identically in $S$) may have in $S$. 

In this paper we study a class of codes that are motivated from the Reed-Muller codes. While defining a Reed-Muller code, one evaluates the set of all reduced polynomials of degrees bounded above by a given quantity on the whole of affine space. Instead, here we consider a subspace of the set of all symmetric polynomials and evaluate them on points from affine spaces that have pairwise distinct coordinates. As it turns out, the relative minimum distance of our codes is same as that of Reed-Muller codes. However, relative dimension of the code is not as good. To this end, we introduce a modified family of codes that has the same relative minimum distance, but a better rate. We also show that, like the Reed-Muller codes, the new codes are also generated by minimum weight codewords. This property, in particular, makes the duals of the new codes useful. 


This article is organized as follows: In Section 2,  we study the number of points over finite fields with pairwise distinct coordinates satisfying multivariate symmetric polynomials that are linear combinations of elementary symmetric polynomials over a finite field. In Section 3, we introduce the new family of codes and study their properties, such as their dimension, minimum weight and minimum weight codewords. In Section 4, we derive upper bounds on the generalized Hamming weights of the codes. In Section 5, we work with the codes that occur from symmetric polynomials in two variables and prove several results including their generalized Hamming weights, weight distributions and higher weight spectra. In Section 6, we specially concentrate on trivariate symmetric polynomials over a field with $5$ elements for the sake of illustrating the difficulties in obtaining the parameters in higher dimensions.

\section{Symmetric polynomials and their distinguished zeroes} \label{defs}
Let $k$ be a field. In most cases, we shall restrict our attention to the case when $k = \Fq$, i.e. $k$ is a finite field with $q$ elements where $q$ is a prime power. For a positive integer $m$ and a nonnegative integer $i$, we denote by $\sigma_m^i$ the $i$-th elementary symmetric polynomial in $m$ variables $x_1, \dots, x_m$, i.e.,
$$\sigma_{m}^i = \sum_{1 \le j_1 < \cdots < j_i \le m} x_{j_1} \cdots x_{j_i}$$
for $1 \le i \le m$ and $\sigma_m^0 = 1$. 
 It is well known that any symmetric polynomial $f \in k [x_1, \dots, x_m]$ can be written as an algebraic expression in $\sigma_m^0, \dots, \sigma_m^m$. However, in this article we are interested in symmetric polynomials that are $k$-linear combinations of elementary symmetric polynomials. We denote by $\SS_m$ the $k$-linear subspace generated by the elementary symmetric polynomials $\sigma_m^0, \dots, \sigma_m^m$. Note that $\dim_k \SS_m = m+1$.

For a given polynomial $f \in k [x_1, \dots, x_m]$, we denote by $Z_k (f)$ the set of zeroes of $f$ in $\AA^m (k)$, the $m$-dimensional affine space over $k$. A point $(a_1, \dots, a_m) \in \AA^m (k)$ is said to be \textit{distinguished} if $a_i \neq a_j$ whenever $i \neq j$. In this paper, we are interested in the distinguished zeroes of symmetric polynomials described in the last paragraph. For ease of reference, we shall denote by $\AA_D (k)^m$ the set of all distinguished points of $\AA^m (k)$. For a subset $S \subset k$, and a polynomial $f \in k[x_1, \dots, x_m]$, we denote by $Z_{S, D}(f)$ the set of all distinguished zeroes of $f$ in $S^m$. Thus,
$$Z_{S, D}(f) := \{ (a_1, \dots, a_m) \in S^m \mid f(a_1, \dots, a_m) = 0, a_i \neq a_j \ \text{for all} \  i \neq j\}.$$
In particular, given a polynomial $f \in k [x_1, \dots, x_m]$, we denote by $Z_{k, D} (f)$ the set of distinguished zeroes of $f$ in $\AA^m (k)$. 

Next we introduce a combinatorial notation for ease of reading. For positive integers $n, r$ we denote by $\P (n, r)$ the number of possible arrangements of $r$ objects taken from $n$ distinct objects. More precisely,
$$
\P(n,r)=
\begin{cases}
{n \choose r} r! \ \ \ \text{if} \ \ r \le n \\
0 \ \ \ \ \ \text{otherwise}.
\end{cases}
$$
It follows trivially that $|\AA_D (\Fq)^m| = \P (q, m)$. We are interested in analyzing the number of distinguished zeroes of a symmetric polynomial that is a linear combinations of the elementary symmetric polynomials on certain finite grids in $\AA^m (k)$. Before we state our main result towards this direction, let us state a few remarks on such polynomials. Let $f \in k[x_1, \dots, x_m]$ be given by
\begin{equation}\label{eff}
f = a_0 + a_1 \sigma_m^1 + \dots + a_m \sigma_m^m
\end{equation}
where $a_0, \dots, a_m \in k$. It can be verified readily that 
\begin{equation}\label{down}
f = \left( a_0 + a_1 \sigma_{m-1}^1 + \dots + a_{m-1} \sigma_{m-1}^{m-1} \right) + x_m  \left( a_1 + a_2 \sigma_{m-1}^1 + \dots + a_{m} \sigma_{m-1}^{m-1} \right).
\end{equation}
For simplicity, we shall write 
\begin{equation}\label{onev}
f = f_1 + x_m f_2,
\end{equation}
where 
$$f_1 = a_0 + a_1 \sigma_{m-1}^1 + \dots + a_{m-1} \sigma_{m-1}^{m-1} \ \ \text{and} \ \ f_2= a_1 + a_2 \sigma_{m-1}^1 + \dots + a_{m} \sigma_{m-1}^{m-1}.$$

We may readily observe that a polynomial $f$ as in equation \eqref{eff} can be classified in two types:

\textbf{Type I: $f_1$ and $f_2$ are linearly dependent.} In this case, there exists $\a \in k$ such that 
$$a_ i = \a a_{i+1} \ \ \ \text{for\ all} \ \  i = 0, \dots, m-1.$$
If $a_m = 0$, then $f$ is a constant polynomial. On the other hand, if $a_m \neq 0$, then 
$$f = a_m (\a^m + \a^{m-1} \sigma_m^1 + \dots + \sigma_m^m).$$
As a consequence, if $f$ is of Type I, then $f = a_m \displaystyle{\prod_{i=1}^m} (\a + x_i)$. 

\textbf{Type II: $f_1$ and $f_2$ are linearly independent.} It is not hard to verify that in this case $f$ is absolutely irreducible, i.e. $f$ is irreducible in an algebraic closure of $k$. 

Note that, if we identify a nonzero polynomial as in \eqref{eff} with the point $[a_0 : a_1 : \dots : a_m]$ in a projective space $\PP^m (k)$ of dimension $m$ over the field $k$, then the polynomials of Type I correspond to (upto multiplication by a nonzero element of $k$) some $k$-rational points of the rational normal curve in $\PP^m (\bar{k})$.  Indeed,  the $k$-rational points of the rational normal curves are of the form $[\a^m : \cdots: \a : 1]$ for some $\a \in k$ or $[0: \cdots :0 : 1]$.  Under the correspondence mentioned as above,  we see that the polynomials of type I (upto multiplication by a nonzero constant in $k$) correspond to the $k$-rational points (of the first kind as mentioned above) of the rational normal curve in $\PP^m (\bar{k})$.  We are now ready to state the first main result of this article. 
%

\begin{thm} \label{basic}
Let $m$ be a positive integer and $S$ be a finite subset of $k$ with $|S| \ge m$. If $f$ is a nonzero symmetric polynomial as in \eqref{eff}, then
\begin{equation}\label{bound}
|Z_{S, D} (f)| \le m \P (|S| - 1, m-1).
\end{equation}
This bound is attained if and only if $f$ is a nonconstant Type I polynomial given by 
$$f = c \prod_{i=1}^m (x_i - b)$$
for some $c \in k$ and $b \in S$. 
Moreover, if $f$ is  non-zero and not of the above type, then
\begin{equation}\label{bound2}
|Z_{S, D}(f)| \le m \P(|S|-1, m-1)) - (|S|-m) \P (|S|-2, m-2).
\end{equation}
\end{thm}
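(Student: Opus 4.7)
The plan is to induct on $m$. The key reduction is a slicing identity: writing $f = f_1 + x_m f_2 \in \SS_m$, for every $a \in k$ the restriction $f|_{x_m = a} = f_1 + a f_2$ lies in $\SS_{m-1}$, and a distinguished zero of $f$ in $S^m$ amounts to a choice $a_m = a \in S$ together with a distinguished zero of $f_1 + a f_2$ in $(S \setminus \{a\})^{m-1}$. Hence
\[
|Z_{S,D}(f)| \;=\; \sum_{a \in S} \bigl| Z_{S \setminus \{a\}, D}(f_1 + a f_2) \bigr|,
\]
and induction applies to each summand because $|S \setminus \{a\}| = |S| - 1 \ge m - 1$. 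The base case $m = 1$ is immediate: a nonzero $f = a_0 + a_1 x_1$ has at most one zero in $S$, attained iff $f = a_1(x_1 - b)$ with $b \in S$.

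The organizing principle of the inductive step is the Type I/II dichotomy from the excerpt, whose content is precisely the linear independence of $f_1$ and $f_2$. In the Type II case, $f_1 + a f_2$ is a nonzero element of $\SS_{m-1}$ for every $a \in k$, so the inductive bound \eqref{bound} in dimension $m - 1$ yields $|Z_{S \setminus \{a\}, D}(f_1 + a f_2)| \le (m-1) \P(|S| - 2, m - 2)$ for each $a$, and therefore
\[
|Z_{S,D}(f)| \;\le\; |S|(m-1)\,\P(|S|-2, m-2).
\]
A brief algebraic rearrangement using $|S|(m-1) = m(|S|-1) - (|S|-m)$ together with $\P(|S|-1, m-1) = (|S|-1)\,\P(|S|-2, m-2)$ rewrites the right-hand side as $m\,\P(|S|-1, m-1) - (|S|-m)\,\P(|S|-2, m-2)$, immediately giving \eqref{bound2} (and hence \eqref{bound}) in the Type II case.

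In the nonconstant Type I case, $f = c\prod_{i=1}^m(x_i - b)$ with $c \ne 0$, so $f|_{x_m = a} = c(a-b)\prod_{i=1}^{m-1}(x_i - b)$ is again Type I in $\SS_{m-1}$. If $b \notin S$, every slice has no distinguished zero in $S \setminus \{a\}$, so $|Z_{S,D}(f)| = 0$. If $b \in S$, the slice $a = b$ vanishes identically and contributes $\P(|S|-1, m-1)$, while every other slice is a nonzero Type I polynomial in $\SS_{m-1}$ with root $b \in S \setminus \{a\}$, so by the equality case of \eqref{bound} in dimension $m - 1$ it contributes exactly $(m-1)\,\P(|S|-2, m-2)$. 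Summing over $a \in S$ yields $|Z_{S,D}(f)| = m\,\P(|S|-1, m-1)$, attaining \eqref{bound}. Any nonzero constant $f$ has no distinguished zeros and satisfies both bounds trivially.

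Combining the three cases, a polynomial that fails to be nonconstant Type I with root in $S$ must be Type II, a nonzero constant, or Type I with $b \notin S$, and each of these satisfies \eqref{bound2}; conversely, the nonconstant Type I polynomials with $b \in S$ are exactly those achieving equality in \eqref{bound}. The main technical step is spotting the algebraic identity that converts the naive Type II slicing bound $|S|(m-1)\,\P(|S|-2, m-2)$ into the sharper form \eqref{bound2}; once that identity is in hand, the Type I contribution of the extra $\P(|S|-1, m-1)$ from the vanishing slice $a = b$ slots in cleanly.
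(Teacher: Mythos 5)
Your proposal is correct and follows essentially the same route as the paper: the same decomposition $f = f_1 + x_m f_2$, the same Type I/II dichotomy, the same slicing identity over $a \in S$, and the same algebraic rearrangement of $|S|(m-1)\P(|S|-2,m-2)$ into the form of \eqref{bound2}. The only cosmetic differences are that you induct on $m$ where the paper inducts on $|S|$ (the two decrease together anyway), and you treat the Type I equality case by slicing rather than by the paper's direct count of tuples containing $b$.
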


\begin{proof}
We prove the inequality \eqref{bound} by induction on $|S|$. 
Suppose that $|S| = 1$. Then $m = 1$ and the assertion follows trivially. Suppose that the assertion is true for all $T \subset k^m$ where $|T| < |S|$ and $m \le |T|$. We distinguish two cases:

\textbf{Case 1:  $f$ is of type I.} In this case, we may write $$f=c(x_1-b)(x_2-b)\cdots(x_m-b)$$ for some $b \in k$.
Note that $(a_1,\cdots,a_m) \in Z_{S, D} (f)$ if and only if $b \in S$ and $a_i=b$ for some $i$. Consequently, 
$$
| Z_{S, D} (f)|=
\begin{cases}
m \P (|S| - 1, m-1) \ \ \ \text{if} \ \  b \in S \\
0 \ \ \ \ \ \text{otherwise}.
\end{cases}
$$

\textbf{Case II: $f$ is of type II.} Write $f = f_1 + x_m f_2$ as in equation \eqref{onev}. Since $f_1$ and $f_2$ are linearly independent, for every $\a \in k$, the polynomial $f (x_1, \dots, x_{m-1}, \a)$ is a nonzero symmetric polynomial that is a linear combination of the elementary symmetric polynomials in $m-1$ variables. 
 Using the inductive hypothesis, we obtain,
 \begin{align*}
&|Z_{S, D} (f)|  \\
&= \sum_{\a \in S} |Z_{S \setminus \{\a\}, D} (f (x_1, \dots, x_{m-1}, \a))|  \\
&\le |S| (m-1) \P (|S| - 2, m-2) \\
&= (|S| - 1) (m-1) \P(|S| - 2, m-2) + (m-1) \P(|S| - 2, m-2) \\
&= (m-1) \P(|S| - 1, m-1) + (m-1) \P(|S| - 2, m-2) \\
&= m   \P(|S| - 1, m-1)  -  (\P(|S| - 1, m-1) + (m-1) \P(|S| - 2, m -2)) \\
& = m   \P(|S| - 1, m-1)  - (|S| - m) \P(|S| - 2, m -2)).
\end{align*}
This completes the proof. 
\end{proof}

We now apply the result to the particular case when $S = \Fq$ to get the following corollary.

\begin{cor}\label{cor1}
Let $f \in \Fq [x_1, \dots, x_m]$ be as in \eqref{eff}. If $m \le q$ and $f \neq 0$, then $|Z_{\Fq, D} (f)| \le m \P(q - 1, m - 1)$. Moreover, the equality holds if and only if $f$ is of Type I.
\end{cor}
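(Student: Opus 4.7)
The plan is to deduce the corollary directly from Theorem \ref{basic} by specializing to $S = \Fq$. Since $|S| = q \ge m$ by hypothesis, Theorem \ref{basic} applies and yields the bound $|Z_{\Fq,D}(f)| \le m\,\P(q-1,m-1)$ with no further work.

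For the equality characterization, I would invoke the second assertion of Theorem \ref{basic}, which says equality holds if and only if $f$ is a nonconstant Type I polynomial of the form $f = c\prod_{i=1}^m(x_i - b)$ for some $c \in k$ and $b \in S$. The only point that requires a short verification is that when $S = \Fq$, the condition $b \in S$ is automatic for any Type I polynomial with coefficients in $\Fq$. Indeed, from the classification preceding the theorem, a nonconstant Type I polynomial has the form $f = a_m\prod_{i=1}^m(\alpha + x_i)$ with $\alpha \in k = \Fq$, so $b = -\alpha$ lies in $\Fq = S$ automatically. Thus the equality condition of Theorem \ref{basic} reduces to "$f$ is a nonconstant Type I polynomial," which is exactly the corollary's claim (with the implicit understanding that a nonzero constant Type I polynomial has no distinguished zeros and hence does not meet the bound when $m \ge 1$).

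There is essentially no obstacle here; the corollary is a transparent specialization, and the only subtlety is checking that the rationality condition $b \in S$ is vacuous in this setting, which follows from the fact that coefficients of $f$ lie in $\Fq$.
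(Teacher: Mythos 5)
Your proposal is correct and matches the paper's (implicit) argument exactly: the corollary is obtained by specializing Theorem \ref{basic} to $S = \Fq$, and your observation that the rationality condition $b \in S$ is automatic for $\Fq$-coefficients is the right (and only) point needing verification. Your parenthetical caveat about nonzero constant Type I polynomials is also apt, since the corollary's ``if and only if'' should be read as referring to nonconstant Type I polynomials.
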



Having known the maximum number of distinguished zeroes of a polynomial as in equation \eqref{eff}, it is important to address the following question.

\begin{question}\label{q1}
Given $f \in \Fq [x_1, \dots, x_m]$ as in \eqref{eff}, what is the possible number of distinguished zeroes in $\AA^m (\Fq)$ that $f$ may admit? 
\end{question} 

One can readily note that $|Z_{\Fq, D}(f)|$ is always divisible by $m!$. Furthermore, if $f$ is a nonzero constant polynomial, then it has no zeroes. If $f$ is a zero polynomial then it has $\P (q, m)$ distinguished zeroes. Moreover, thanks to Corollary \ref{cor1}, if $f$ is nonzero and of Type I, then it has $m \P(q-1, m-1)$ distinguished zeroes. We remark that the above question is equivalent to the question of determination of the weight distribution of the code defined in Section \ref{2}. In general, it is a hard question to answer. Here we completely work out the case when $m = 2$ and leave the general question open for further research.  

\begin{thm}\label{qodd} 
Let $q$ be odd, $m=2$, and $f \in \Fq[x_1, x_2]$ be given by $f = a_0 + a_1 (x_1 + x_2) + a_2 x_1 x_2$. If $\D :=  a_1^2 - a_0 a_2$, then.
\begin{equation*}
|Z_{\Fq, D} (f)|=
\begin{cases}
0,  \ \ \ \text{if} \ a_0 \neq 0  \ \ \text{and}\ \ (a_1, a_2) = (0, 0) \\
q-3, \ \  \text{if} \   a_2 \neq 0, \D \in \ \Fq^2  \ \text{and} \ \D \neq 0 \\
q-1, \ \  \text{if} \  \{a_2 = 0 \  \text{and} \  a_1 \neq 0\} \ \text{or} \\
\ \ \ \ \ \ \ \ \ \{a_2 \neq 0, \D \not\in \ \Fq^2  \ \text{and} \ \D \neq 0\} \\
2(q-1), \ \ \text{if} \ a_2 \neq 0 \ \text{and} \ \D = 0 \\
q(q-1), \ \ \text{if} \ (a_0, a_ 1, a_2) = (0, 0, 0)
\end{cases}
\end{equation*}
\end{thm}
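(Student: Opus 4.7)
The plan is to partition the coefficient space according to which of $a_2$, $a_1$, $a_0$, and $\D = a_1^2 - a_0 a_2$ vanish, and to count zeros directly in each of the resulting strata. Three of the five cases are essentially free from any computation.

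For the trivial strata: if $(a_0, a_1, a_2) = (0,0,0)$, then $f \equiv 0$ and every distinguished pair is a zero, giving $|Z_{\Fq, D}(f)| = |\AA_D(\Fq)^2| = q(q-1)$. If $(a_1, a_2) = (0,0)$ but $a_0 \neq 0$, then $f$ is a nonzero constant, hence has no zero at all. If $a_2 = 0$ and $a_1 \neq 0$, then $f(x_1, x_2) = 0$ reads $x_1 + x_2 = -a_0/a_1$; this affine line contains $q$ ordered pairs in $\Fq^2$, and since $q$ is odd, exactly one of them lies on the diagonal (namely $x_1 = x_2 = -a_0/(2a_1)$), contributing $q - 1$ distinguished zeros.

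The substantive case is $a_2 \neq 0$. When $\D = 0$, I would use the identity $a_2 f = a_1^2 + a_1 a_2 (x_1 + x_2) + a_2^2 x_1 x_2 = (a_1 + a_2 x_1)(a_1 + a_2 x_2)$ to rewrite $f$ as $a_2 (x_1 - b)(x_2 - b)$ with $b = -a_1/a_2 \in \Fq$. Hence $f$ is of Type I with $b \in \Fq$, and Corollary \ref{cor1} immediately gives $|Z_{\Fq, D}(f)| = 2(q-1)$.

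When $a_2 \neq 0$ and $\D \neq 0$, I would write $f = (a_0 + a_1 x_1) + x_2 (a_1 + a_2 x_1)$ and solve for $x_2$ as a rational function of $x_1$. The coefficient of $x_2$ vanishes only at $x_1 = -a_1/a_2$, and at that point the constant term equals $-\D/a_2 \neq 0$, so no zero of $f$ arises from this fiber; every other $x_1 \in \Fq$ produces a unique $x_2$, so the total count of ordered zeros of $f$ in $\Fq^2$ is $q - 1$. The diagonal solutions satisfy $a_2 t^2 + 2 a_1 t + a_0 = 0$, a quadratic whose discriminant is $4\D$; since $q$ is odd, $4$ is a nonzero square in $\Fq$, so this quadratic has two roots when $\D$ is a nonzero square and no roots when $\D$ is a non-square. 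Subtracting the diagonal contributions yields $q - 3$ and $q - 1$ respectively, completing the classification. The only conceptual step is noticing that $\D = 0$ already forces $f$ to be of Type I; no significant obstacle is anticipated.
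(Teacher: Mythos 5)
Your proposal is correct and follows essentially the same route as the paper: in the main case $a_2 \neq 0$, $\D \neq 0$ both arguments establish that $f$ has exactly $q-1$ zeros in $\AA^2(\Fq)$ and then subtract the diagonal ones, whose number ($0$ or $2$) is governed by whether $\D$ is a square. The only cosmetic differences are that the paper normalizes to $X_1X_2 = c$ by a change of coordinates where you fiber over $x_1$ and use the discriminant $4\D$ of $f(t,t)$, and that you handle $\D = 0$ by recognizing $f$ as Type I and citing Corollary \ref{cor1} rather than by counting zeros of $X_1X_2$ directly.
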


\begin{proof}
If $f = 0$, then $|Z_{\Fq, D}(f)| = \P(q, 2)$. Conversely, it is clear from Corollary \ref{cor1} that if $|Z_{\Fq, D}(f)| = \P(q, 2)$, then $f=0$.  So we may assume that $f \neq 0$, i.e. $(a_0, a_1, a_2) \neq (0, 0, 0)$.   We distinguish the proof into several cases:

\begin{enumerate}
\item[\textbf{Case 1:}] Suppose $a_2 = 0$. If $a_1 =0$, then $f$ is a nonzero constant polynomial which does not have any zeroes. So we may assume that $a_1 \neq 0$. Then the polynomial $ a_0 + a_1 (x_1 + x_2)$ has $q - 1$ distinguished zeroes. 

\item[\textbf{Case 2:}] Suppose $a_2 \neq 0$. We may write
\begin{align*}
f (x_1, x_2) &= a_2 x_1 x_2 + a_1(x_1 + x_2) + a_0 \\
&= a_2 \left(x_1 x_2 + \frac{a_1}{a_2} (x_1 + x_2) + \frac{a_1^2}{a_2^2} \right) + {a_0} - \frac{a_1^2}{a_2} \\
&= a_2 \left( x_1 + \frac{a_1}{a_2}\right) \left( x_2 + \frac{a_1}{a_2}\right) + {a_0} - \frac{a_1^2}{a_2}
\end{align*}
By using the change of coordinates $X_1 = x_1 + a_1/a_2$ and $X_2 = x_2 + a_1/a_2$, and we get a new polynomial 
$$f' (X_1, X_2) =  X_1 X_2 - \frac{a_0a_2 - a_1^2}{a_2^2}.$$
It is clear that there is a one-one correspondence between the set of distinguished zeroes of $f$ and $f'$. This leads us to analyzing the distinguished zeroes of the polynomial $f'$. Note that the number of distinguished zeroes of $f'$ depends of the quantity $\D$.

\begin{enumerate}
\item[\textit{Subcase 1:}] Suppose $\D = 0$. Then the polynomial $f' (X_1, X_2) = X_1X_2$ has exactly $2(q-1)$ many distinguished zeroes. 

\item[\textit{Subcase 2:}] Suppose $\D \neq 0$ and $\D$ is a square in $\Fq$. Note that the polynomial $X_1 X_2 - \D/a_2^2$ has $q - 1$ zeroes.  Since $\D$ is a square in $\Fq$,   the polynomial $x^2 - \D/a_2^2$ has two distinct roots in $\Fq$.  As a consequence the polynomial $X_1X_2 - \D / a_2^2$ has exactly two zeroes that are nondistinguished, namely $(c_1, c_1)$ and $(c_2, c_2)$, where $c_1, c_2$ are the distinct solutions to the equation $x^2 - \D/a_2^2 = 0$. Thus,  polynomial $X_1 X_2 - \D/a_2^2$ have exactly $q-3$ distinguished zeroes.

\item[\textit{Subcase 3:}] Suppose $\D \neq 0$ and $\D$ is not a square in $\Fq$. In this case, all the zeroes of $X_1 X_2 - \D/a_2^2$ are distinguished.  This follows from the fact that the one variable polynomial equation $x^2 - \D/a_2^2 = 0$ has no solutions in $\Fq$. As a consequence, the number of distinguished zeroes of such a polynomial is $q-1$.
\end{enumerate}
\end{enumerate}
This completes the proof. 
\end{proof}

\begin{rem}\label{rem1}\normalfont
It is not very difficult to count the number of polynomials that have $0, q-3, q-1, 2(q-1),$ and $q(q-1)$ distinguished zeroes. It is trivial to see that there are $q-1$ nonzero constant polynomials admitting no zeroes and exactly one polynomial, namely the zero polynomial, admitting $q(q-1)$ distinguished zeroes. In order to count the number of polynomials $a_0 + a_1(x_1 + x_2) + a_2 x_1 x_2$, or equivalently, the tuples $(a_0, a_1, a_2)$ satisfying the  conditions $a_2 \neq 0$ and $\D$ is a nonzero square in $\Fq$, we note that there are $(q-1)/2$ possible values for $\D$, and for each of these choices, the $q(q-1)$ choices of $(a_1, a_2)$ (namely $q-1$ choices for a nonzero value of $a_2$ and $q$ choices for $a_1$) determines $a_0$ uniquely.  This results in a total of $q(q-1)^2/2$ many polynomials admitting $q-3$ distinguished zeroes.  Furthermore,  from Theorem \ref{qodd} we note that a  polynomial $a_0 + a_1 (x_1 + x_2) + a_2x_1x_2$ has $q-1$ distinguished zeros iff 
$$(a_0, a_1, a_2) \in \{a_2 = 0 \  \text{and} \  a_1 \neq 0\} \cup \{a_2 \neq 0, \D \not\in \ \Fq^2  \ \text{and} \ \D \neq 0\}.$$
Let $S_1 = \{a_2 = 0 \  \text{and} \  a_1 \neq 0\}$ and $S_2 = \{a_2 \neq 0, \D \not\in \ \Fq^2  \ \text{and} \ \D \neq 0\}$. Clearly, $|S_1| = q (q - 1)$.  To enumerate $S_2$,  we note that the quantity $\D$ takes $(q-1)/2$ possible values.  Now for each of these $(q-1)/2$ values of $\D$,  the element $a_0$ is determined uniquely for each choice of $(a_1, a_2)$ satisfying $a_2 \neq 0$.  It follows that $|S_2| = q(q-1)(q-1)/2$.  Thus,
$$|S_1| + |S_2| = q(q-1) + q(q-1)\frac{(q-1)}{2} = \frac{q(q-1)(q+1)}{2}.$$
Since $S_1$ and $S_2$ are disjoint, we see that there are exactly $q(q-1)(q+1)/2$ many polynomials having $q-1$ disinguished zeroes.  Finally, the polynomials that have exactly $2(q-1)$ distinguished zeroes are polynomials of Type I and there are $q(q-1)$ many polynomials of this type.  Hence there are exactly $q(q-1)$ many polynomials that admit $2(q-1)$ distinguished zeroes. We tabulate this data in Table \ref{tab1} below.

\end{rem}
\begin{center}\label{t:qodd}
\begin{table}[t]
\begin{tabular}{ |c|c| } 
 \hline
 Number of distinguished zeroes & Number of polynomials  \\ 
\hline
 $0$ & $q-1$ \\ 
\hline
 $q-3$ & $\frac{q(q-1)^2}{2}$ \\ 
\hline
$q-1$ & $\frac{q(q-1)(q+1)}{2}$ \\
\hline
$2(q-1)$ & $q(q-1)$\\
\hline
$q(q-1)$ & $1$ \\
 \hline
\end{tabular}
\caption{Number of polynomials with given number of distinguished zeroes when $q$ is odd}
\label{tab1}
\end{table}
\end{center}
We remark that, in the particular case when $q = 3$, then the nonzero constant polynomials as well as the polynomials satisfying the conditions $a_2 \neq 0$ and $\D$ a nonzero square in $\Fq$ admit no distinguished zeroes. We now study the case when $q$ is even. 
The proof is essentially similar, but the difference lies in the fact that every element of $\Fq$ is a square in $\Fq$. We include the complete proof for the ease of the reader. 

\begin{thm}\label{qeven}
Let $q \ge 4$ be even, $m=2$, and $f \in \Fq[x_1, x_2]$ be given by $f = a_0 + a_1 (x_1 + x_2) + a_2 x_1 x_2$. If $\D :=  a_1^2 - a_0 a_2$, then.
\begin{equation*}
|Z_{\Fq, D} (f)|=
\begin{cases}
0,  \ \ \ \text{if} \ \{a_0 \neq 0   \ \text{and} \ (a_1, a_2) = (0, 0)\} \\ 
 \ \ \ \ \ \ \ \ \ \ \text{or} \ \{a_1 \neq 0 \ \text{and} \ (a_0, a_2) = (0, 0)\} \\
q, \ \  \text{if} \   a_0a_1 \neq 0 \ \text{and} \  a_2 = 0  \\
q-2, \ \  \text{if} \  a_2 \neq 0,  \ \text{and} \ \D \neq 0 \\
2(q-1), \ \ \text{if} \ a_2 \neq 0 \ \text{and} \ \D = 0 \\
q(q-1), \ \ \text{if} \ (a_0, a_ 1, a_2) = (0, 0, 0)
\end{cases}
\end{equation*}
\end{thm}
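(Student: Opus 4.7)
The plan is to mirror the case analysis in the proof of Theorem \ref{qodd}, carefully tracking the two features of characteristic $2$ that change the count: first, the identity $-1 = 1$, which identifies $\Delta$ with $-\Delta$ and collapses the linear form $x_1 - x_2$ to $x_1 + x_2$; and second, the fact that the Frobenius map $x \mapsto x^2$ is a bijection on $\Fq$, so every element of $\Fq$ is a unique square.

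First I would dispose of the trivial extremes: if $f = 0$ then $|Z_{\Fq, D}(f)| = \P(q,2) = q(q-1)$, and if $(a_1, a_2) = (0,0)$ and $a_0 \neq 0$ then $f$ is a nonzero constant with no zeros. Next I would treat the case $a_2 = 0$, $a_1 \neq 0$. Here $f = a_0 + a_1(x_1+x_2)$, and the zero set in $\Fq^2$ is the line $x_1 + x_2 = a_0/a_1$. Writing $x_2 = a_0/a_1 + x_1$ (no sign change, since $-1 = 1$), the pair is distinguished if and only if $a_0/a_1 \neq 0$; so we get $q$ distinguished zeros when $a_0 \neq 0$ and $0$ distinguished zeros when $a_0 = 0$. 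This already accounts for all the rows of the table in which $a_2 = 0$.

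For $a_2 \neq 0$ I would perform the same translation $X_i = x_i + a_1/a_2$ used in Theorem \ref{qodd}, which (after noting that in characteristic $2$ the constant term $a_0 - a_1^2/a_2$ equals $-\Delta/a_2^2 = \Delta/a_2^2$) yields the equivalent polynomial
\[
f'(X_1, X_2) = X_1 X_2 + \Delta/a_2^2,
\]
with a bijection between the distinguished zeros of $f$ and of $f'$. If $\Delta = 0$ then $f' = X_1 X_2$, whose zeros form the union of the two coordinate axes; the distinguished ones are those with exactly one coordinate zero, giving $2(q-1)$. If $\Delta \neq 0$, then for each nonzero $X_1 \in \Fq$ there is a unique nonzero $X_2 = \Delta/(a_2^2 X_1)$, so $f'$ has exactly $q-1$ zeros; the non-distinguished ones satisfy $X_1 = X_2$, i.e.\ $X_1^2 = \Delta/a_2^2$. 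Because squaring is a bijection on $\Fq$ in characteristic $2$, there is exactly one such $X_1$, so precisely one non-distinguished zero, leaving $q-2$ distinguished zeros.

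The main obstacle is simply keeping the case division tight and exhaustive: unlike in the odd-characteristic proof, where the discriminant split into squares and non-squares, here the dichotomy is replaced by the cleaner fact that Frobenius is a bijection, but one has to notice the new degenerate phenomenon in the linear case $a_2 = 0$, $a_1 \neq 0$, $a_0 = 0$, where the line $x_1 + x_2 = 0$ consists entirely of non-distinguished pairs. Once this is observed, each row of the table is covered by exactly one of the subcases above, and the $q \ge 4$ hypothesis ensures $q - 2 > 0$ so that no further degeneracy occurs.
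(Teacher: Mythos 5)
Your proposal is correct and follows essentially the same route as the paper's own proof: the same trivial extremes, the same split on $a_2=0$ versus $a_2\neq 0$, the same translation $X_i = x_i + a_1/a_2$ reducing to $X_1X_2 + \Delta/a_2^2$, and the same use of the bijectivity of $x\mapsto x^2$ in characteristic $2$ to count exactly one non-distinguished zero when $\Delta\neq 0$. No gaps.
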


\begin{proof}
If $f = 0$, then $|Z_{\Fq, D}(f)| = \P(q, 2)$. As in Proposition \ref{qodd}, it is clear from Corollary \ref{cor1} that if $|Z_{\Fq, D}(f)| = \P(q, 2)$, then $f=0$.  So we may assume that $f \neq 0$, i.e. $(a_0, a_1, a_2) \neq (0, 0, 0)$.   We again distinguish the proof into several cases:

\begin{enumerate}
\item[\textbf{Case 1:}] Suppose $a_2 = 0$. If $a_1 =0$, then $f$ is a nonzero constant polynomial which does not have any zeroes. So we may assume that $a_1 \neq 0$. 

\begin{enumerate}
\item[\textit{Subcase 1:}] If $a_0 \neq 0$, then all the zeroes of $a_0 + a_1(x_1 + x_2)$ are distinguished.  Note that if $a_0 + a_1 (x_1 + x_2)$ has a nondistinguished zero, say $(c, c)$, then $a_0 = a_1 (c + c) = 0$ since $q$ is even. This leads us to a contradiction to the assumption that $a_0 \neq 0$. Consequently, $|Z_{\Fq, D}(f)| = q$. 
\item[\textit{Subcase 2:}]
Then the zeroes of the polynomial $ a_1 (x_1 + x_2)$ are not distinguished.  Thus  $|Z_{\Fq, D}(f)| = 0$. 
\end{enumerate}

\item[\textbf{Case 2:}] Suppose $a_2 \neq 0$. As in Proposition \ref{qodd}, after a suitable change of coordinates, we get a polynomial 
$$f' (X_1, X_2) =  X_1 X_2 - \frac{a_0a_2 - a_1^2}{a_2^2},$$
with $|Z_{\Fq, D} (f)| = |Z_{\Fq, D} (f')|$. 

\begin{enumerate}
\item[\textit{Subcase 1:}] Suppose $\D = 0$. Then the polynomial $X_1X_2$ has exactly $2(q-1)$ many distnguished zeroes. 

\item[\textit{Subcase 2:}] Suppose $\D \neq 0$. Since $q$ is even, $D$ is a square in $\Fq$. Note that the polynomial $X_1 X_2 - \D/a_2^2$ has $q - 1$ zeroes and out of them only one is nondistinguished. Consequently, such a polynomial have $q-2$ distinguished zeroes.
\end{enumerate}
\end{enumerate}
This completes the proof. 
\end{proof}

\begin{center}\label{t:qeven}
\begin{table}[t] 
\begin{tabular}{ |c|c| } 
 \hline
 Number of distinguished zeroes & Number of polynomials  \\ 
\hline
\hline
 $0$ & $2(q-1)$ \\ 
\hline
 $q$ & $(q-1)^2$ \\ 
\hline
$q-2$ & $q(q-1)^2$ \\
\hline
$2(q-1)$ & $q(q-1)$\\
\hline
$q(q-1)$ & $1$ \\
 \hline
\end{tabular}
\caption{Number of polynomials with given number of distinguished zeroes when $q$ is even}
\label{tab2}
\end{table}
\end{center}

Again, it is not very difficult to compute the number of polynomials that admits a given number of distinguished zeroes in the case when $q$ is even. We omit the explicit computations, but present the data in Table \ref{tab2}.

\section{Reed-Muller type codes from symmetric polynomials}\label{2}
Throughout this section, we will denote by $\Fq$ a finite field with $q$ elements where $q$ is a power of a prime number. As in Section \ref{defs}, we denote by $\SS_m$ the vector space consisting of all symmetric polynomials as in \eqref{eff}. As noted before, $\SS_m$ is a vector space of dimension $m+1$ over $\Fq$. Let $n = \P (q, m)$. 

\begin{defn}\normalfont
We fix an ordering $\{P_1, \dots, P_n\}$ of elements in $\AA_D^m$. Define an evaluation map
$$\ev : \SS_m \to \Fq^n, \ \ \ \text{given \ by} \ \ \ f \mapsto (f(P_1), \dots, f(P_n)).$$  
It is readily seen that $\ev$ is a linear map and consequently the image, $\C_m$ of $\ev$ is a code. 
\end{defn}

We discuss some properties of this code in the following proposition:

\begin{prop}\label{bigcode}
If $m < q$, then the code $\C_m$ is a nondegenerate $[n, k, d]$ code, where $n = \P(q, m)$, $k = m+1$ and $d = (q - m) \P (q - 1, m - 1)$. Furthermore, the code $\C_m$ is generated by minimum weight codewords.  
\end{prop}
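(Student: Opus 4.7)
The plan is to derive all five pieces ($n$, $k$, $d$, nondegeneracy, and the spanning property) as fairly direct consequences of Corollary \ref{cor1} and Theorem \ref{basic}, capped off with one Vandermonde computation.

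First, $n = |\AA_D(\Fq)^m| = \P(q,m)$ by the definition of $\ev$. Nondegeneracy is immediate from $\ev(1) = (1,\dots,1)$, which has no zero coordinate. For the dimension I would check that $\ev$ is injective: for any nonzero $f \in \SS_m$, Corollary \ref{cor1} gives
\[
|Z_{\Fq, D}(f)| \,\le\, m\,\P(q-1,m-1) \,<\, q\,\P(q-1,m-1) \,=\, \P(q,m),
\]
where the strict inequality uses the hypothesis $m<q$. Hence $\ev(f)\neq 0$ whenever $f\neq 0$, and $\dim_{\Fq}\C_m = \dim_{\Fq}\SS_m = m+1$. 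For the minimum distance, the weight of $\ev(f)$ equals $n-|Z_{\Fq,D}(f)|$, so
\[
d \,=\, \P(q,m) - \max_{f\neq 0} |Z_{\Fq,D}(f)| \,=\, \P(q,m) - m\,\P(q-1,m-1) \,=\, (q-m)\,\P(q-1,m-1),
\]
again by Corollary \ref{cor1}.

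The substantive step is the spanning by minimum weight codewords. By the equality clause of Theorem \ref{basic} (with $S=\Fq$), a nonzero $f\in\SS_m$ realizes the bound $|Z_{\Fq,D}(f)| = m\,\P(q-1,m-1)$ exactly when $f = c\,g_b$ for some $c\in\Fq^\ast$ and $b\in\Fq$, where
\[
g_b(x_1,\dots,x_m) \,:=\, \prod_{i=1}^{m} (x_i - b) \,=\, \sum_{i=0}^{m} (-1)^i b^i \sigma_m^{m-i}.
\]
Reading off the expansion, the coefficient vector of $g_b$ in the basis $(\sigma_m^m,\sigma_m^{m-1},\dots,\sigma_m^0)$ of $\SS_m$ is $(1,-b,b^2,\dots,(-b)^m)$. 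Since $m<q$ forces $q \ge m+1$, I can choose $m+1$ distinct scalars $b_0,\dots,b_m\in\Fq$; the resulting $(m+1)\times(m+1)$ coefficient matrix has entry $(-b_j)^k$ in position $(j,k)$ and is a Vandermonde matrix in the parameters $-b_0,\dots,-b_m$, hence invertible. Therefore $g_{b_0},\dots,g_{b_m}$ is a basis of $\SS_m$, and by injectivity of $\ev$ their images form a basis of $\C_m$ consisting entirely of minimum weight codewords.

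I do not expect a genuine obstacle: the key observation, which is what makes the last step work, is that the minimum weight witnesses $\{g_b\}_{b\in\Fq}$ are parametrized linearly by powers of $b$ in a fixed basis of $\SS_m$, so the Vandermonde structure is automatic. The hypothesis $m<q$ is invoked twice, once to separate $m\,\P(q-1,m-1)$ from $\P(q,m)$ (giving injectivity of $\ev$ and the correct value of $d$) and once to guarantee enough distinct scalars $b_j \in \Fq$ for the Vandermonde step.
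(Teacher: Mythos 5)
Your proposal is correct and follows essentially the same route as the paper: nondegeneracy via $\ev(1)$, injectivity and the minimum distance from Corollary \ref{cor1} using $m<q$, and spanning by minimum weight words via a Vandermonde argument on $m+1$ polynomials of the form $\prod_i(x_i-b_j)$ for distinct $b_j$. The only cosmetic difference is that you write out the coefficient expansion of $g_b$ explicitly in the $\sigma_m^i$ basis, whereas the paper just invokes the Vandermonde determinant for $f_i=(x_1+\a_i)\cdots(x_m+\a_i)$.
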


\begin{proof}
The statement on the length of the code is trivial, while the fact that the code is nondegenerate follows readily by observing that $\ev (1) = (1, \dots, 1) \in C_m$. To show that $C_m$ is of dimension $m+1$, it is enough to show that the map $\ev$ is injective. To this end, let $f \in \SS_m$ with $\ev (f) = (0, \dots, 0)$. Then $|Z_{\Fq, D}(f)| = \P (q, m)$. But from Corollary \ref{cor1}, we see that, if $f \neq 0$, then $|Z_{\Fq, D}(f)| \le m \P(q-1, m-1)$. Since $m < q$, we have $m \P(q-1, m-1) < \P (q, m)$. This implies $f=0$. Consequently, the map $\ev$ is injective. The assertion on the minimum distance follows from Corollary \ref{cor1} since
\begin{align*}
d& = \min_{f \in \sigma_m} |\{i  \mid f (P_i) \neq 0\}| \\
& = \P(q, m) - \max_{f \in \Sigma_m} |\{i \mid f(P_i) = 0\}| \\
& = \P (q, m) - m \P (q-1, m-1) \\
&=(q -m) \P(q-1, m-1).
\end{align*}

   Moreover, it is clear from the last assertion of Corollary \ref{cor1} that the minimum weight codewords of $\C_m$ are given by $\ev (f)$ where $f$ is a Type I polynomial. Thus, to show that $\C_m$ is generated by minimum weight codewords, it is now enough to prove that $\SS_m$ is spanned by a set of $m+1$ Type I polynomials. Since $m + 1 \le q$, we may choose $\a_1, \dots, \a_{m+1} \in \Fq$ that are distinct. For each $i = 1, \dots, m+1$, we define 
$$f_{i} = (x_1 + \a_i) \cdots (x_m + \a_i).$$
Since $\a_1, \dots, \a_{m+1}$ are distinct, it follows from the Vandermonde determinant formula that $f_1, \dots, f_{m+1}$ are linearly independent. Since $\dim_{\Fq}{\SS_m} = m+1$, they span the vector space $\SS_m$. This completes the proof. 
\end{proof}

\begin{rem}\normalfont \label{relative}
We note that the relative minimum distance of  $\C_m$ is the same as that of the generalized Reed-Muller codes of order $m$. 
\end{rem}

The code $\C_m$ is made by evaluating each of the functions in $\SS_m$ at the points of $\AA_D^m$.
But the points of $\AA_D^m$ constitute a disjoint union of $S_m$-orbits, each of cardinality $m!$, where the symmetric group $S_m$ in $m$ letters acts freely by permuting the coordinates. This motivates us in defining a code of smaller length, namely, by constructing a smaller evaluation set, say $R_D$, consisting of one point from each of the $S_m$ orbits mentioned above. Again we fix an ordering of the elements in the set $R_D$, say $Q_1, \dots, Q_N$, where $N = {q \choose m}$.  

We now consider the restriction of the evaluation map, still denoted by $\ev$:
$$\ev:\SS_m \rightarrow \Fq^N \ \ \ \text{given by} \ \ \ \ f \mapsto (f (Q_1), \dots, f(Q_N)).$$
 Let $\C'_m$ denote the image of $R_D$ under the map $\ev$. The following proposition follows readily from Proposition \ref{bigcode}. 

\begin{prop} \label{risone}
If $m < q$, then $\C'_m$ is a nondegenerate $[N, K, D]$ linear code where $N={q \choose m}$,  $K=m+1$ and  
$D={q \choose m}-{q-1 \choose m-1}$.
\end{prop}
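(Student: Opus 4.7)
The plan is to exploit the fact that $\C'_m$ is literally a ``folded'' copy of $\C_m$. Since every polynomial in $\SS_m$ is symmetric in $x_1,\dots,x_m$, the group $S_m$ acts on $\AA_D^m$ by permuting coordinates, and every codeword of $\C_m$ is constant along the $S_m$-orbits. The action is free on $\AA_D^m$ (the coordinates are pairwise distinct), so each orbit has size exactly $m!$, and the number of orbits is $\P(q,m)/m! = {q \choose m}$. Choosing one representative per orbit produces $R_D$ of size $N = {q \choose m}$, and for each $f \in \SS_m$ the vector $\ev(f) \in \Fq^N$ is obtained from the longer $\ev(f) \in \Fq^n$ simply by discarding $S_m$-redundant coordinates. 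In particular, the weight in $\C'_m$ equals the weight in $\C_m$ divided by $m!$.

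With this dictionary, all three claims of Proposition \ref{risone} reduce to transferring Proposition \ref{bigcode}. The length is immediate. For the dimension, I would argue that if $f \in \SS_m$ vanishes at every $Q_j \in R_D$, then by $S_m$-symmetry it vanishes at every $S_m$-translate of each $Q_j$, hence on all of $\AA_D^m$; Proposition \ref{bigcode} then forces $f=0$. Thus the restricted evaluation map is still injective and $K = \dim_{\Fq} \SS_m = m+1$. Nondegeneracy is immediate from $\ev(1)=(1,\dots,1) \in \C'_m$.

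For the minimum distance, the weight-scaling observation gives
$$D \;=\; \frac{(q-m)\,\P(q-1,m-1)}{m!}.$$
Using $\P(n,r) = n!/(n-r)!$, this simplifies to ${q-1 \choose m}$, which by Pascal's identity equals ${q \choose m} - {q-1 \choose m-1}$, as claimed. There is essentially no obstacle in this proof: once the orbit-folding correspondence between $\C_m$ and $\C'_m$ is recorded, everything is a routine translation of the preceding proposition, and the only bookkeeping step is the elementary rewriting of the binomial expression.
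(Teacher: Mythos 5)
Your proof is correct and follows essentially the same route as the paper: the paper likewise obtains length and dimension by transferring the argument of Proposition \ref{bigcode}, and derives the minimum distance from Corollary \ref{cor1} together with the identity $\mathrm{wt}(\ev(f)) = \binom{q}{m} - \frac{1}{m!}|Z_{\Fq,D}(f)|$, which is exactly your orbit-folding observation. Your explicit justification of injectivity via $S_m$-symmetry and the binomial simplification are just the details the paper leaves implicit.
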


\begin{proof}
The assertions on length and dimension is readily obtained as in the case with Proposition \ref{bigcode}. The assertion on minimum distance is deduced from Corollary \ref{cor1} and the fact that the weight of any codeword $\ev (f) \in \C'_m$ is given by ${q \choose m} - \frac{1}{m!}|Z_{\Fq, D} (f)|$. 
\end{proof}

\section{Generalized Hamming weights} 
Ever since their introduction by V. Wei in \cite{Wei}, the computation of generalized Hamming weights of several codes have been in the center of interest of many mathematicians and coding theorists. The study of generalized Hamming weights of several evaluation codes has paved the way for a lot of research articles such as \cite{BD, GL, HP} among others.  Before we proceed further, we recall the definition of generalized Hamming weights of linear codes. 

\begin{defn}\normalfont
Fix positive integers $n$ and $k$. 
\begin{enumerate}
\item[(a)] Let $W$ be a linear subspace of $\Fq^n$.  We define the support of $W$, denoted by $\supp (W)$, as follows:
$$\supp(W) = \{i \mid x_i \neq 0 \ \text{for some} \ (x_1, \dots, x_n) \in W\}.$$
\item[(b)] Given an $[n, k]$ code $C$ and a positive integer $1 \le r \le k$, we define the $r$-th generalized Hamming weight of $C$, denoted by $d_r(C)$, as
$$d_r (C) = \min \{|\supp (W) \mid W \ \text{is a subspace of C}, \dim W = r\}.$$

\end{enumerate}
\end{defn}

In this section, we derive some natural upper bounds on the generalized Hamming weights of the codes $\C_m$ and $\C'_m$. At the outset, we remark that it is enough to derive any parameters related to the Hamming weight of codewords for one of the codes. Since, the codes $\C_m$ are somewhat more natural to work with, we choose to restrict our attention to them.

\begin{prop} \label{first}
Fix positive integers $1 \le r < m + 1 \le q$ and denote by $d_r$ the $r$-th generalized Hamming weight of $\C_m$. We have
$$d_r \le \P(q, m)  - m! {q - r \choose m-r}.$$
\end{prop}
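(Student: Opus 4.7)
The plan is to exhibit an explicit $r$-dimensional subcode whose support has size exactly $\P(q,m)-m!\binom{q-r}{m-r}$, which immediately yields the bound on $d_r$. The key observation is that for a linear subspace $W=\ev(V)$ with $V\subseteq\SS_m$, one has $|\supp(W)|=\P(q,m)-|Z|$, where $Z$ is the set of common distinguished zeroes of all polynomials in $V$; equivalently the common distinguished zeros of any basis of $V$. So I want to choose $r$ polynomials whose common distinguished zero locus is as large as possible.

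Since $r<m+1\le q$, I can pick $r$ pairwise distinct elements $\alpha_1,\dots,\alpha_r\in\Fq$ and form the Type I polynomials
\[
f_i=\prod_{j=1}^m (x_j-\alpha_i)\in\SS_m,\qquad i=1,\dots,r.
\]
First I would check linear independence exactly as in the proof of Proposition \ref{bigcode}: expanding $f_i=\sum_{k=0}^m(-1)^{m-k}\alpha_i^{m-k}\sigma_m^k$ shows that the coefficient matrix in the basis $\sigma_m^0,\dots,\sigma_m^m$ is (the first $r$ rows of) a Vandermonde matrix in $\alpha_1,\dots,\alpha_r$, hence of rank $r$. Set $V=\mathrm{span}_{\Fq}\{f_1,\dots,f_r\}$, a subspace of dimension $r$; then $W:=\ev(V)$ has dimension $r$ because $\ev$ is injective by Proposition \ref{bigcode}.

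Next I would count the common distinguished zeros of $f_1,\dots,f_r$. A point $(a_1,\dots,a_m)\in\AA_D^m(\Fq)$ satisfies $f_i=0$ precisely when some coordinate $a_j$ equals $\alpha_i$, so the system is satisfied simultaneously exactly when $\{\alpha_1,\dots,\alpha_r\}\subseteq\{a_1,\dots,a_m\}$ (recall that the $a_j$ are pairwise distinct). The count is combinatorial: there are $m!/(m-r)!=\P(m,r)$ ways to place $\alpha_1,\dots,\alpha_r$ into $r$ of the $m$ coordinate slots, and then $\P(q-r,m-r)$ ways to complete the tuple with distinct elements drawn from $\Fq\setminus\{\alpha_1,\dots,\alpha_r\}$. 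Multiplying,
\[
|Z|=\frac{m!}{(m-r)!}\,\P(q-r,m-r)=m!\binom{q-r}{m-r}.
\]

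Putting these together, $|\supp(W)|=\P(q,m)-m!\binom{q-r}{m-r}$, which gives the desired upper bound on $d_r$. There is no real obstacle here: the only thing that requires care is verifying linear independence (handled by Vandermonde as in Proposition \ref{bigcode}) and the arrangement count of distinguished $m$-tuples containing a prescribed $r$-subset of coordinate values, which is the elementary identity above.
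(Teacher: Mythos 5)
Your proposal is correct and follows essentially the same route as the paper's proof: the same family of Type I polynomials $\prod_j(x_j-\alpha_i)$ for $r$ distinct field elements, the same identification of their common distinguished zero locus as the tuples whose coordinate set contains $\{\alpha_1,\dots,\alpha_r\}$, and the same count $m!\binom{q-r}{m-r}$ (merely organized as ``place the $\alpha_i$ then complete'' rather than ``choose the unordered set then order''). Your explicit Vandermonde check of linear independence is a slightly more detailed version of what the paper asserts.
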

\begin{proof}
 Since $1 \le r \le q$, there exist distinct elements $b_1, \dots, b_r \in \Fq$. For $i = 1, \dots, r$, we consider the polynomials 
$$f_i := (x_1 - b_i) \cdots (x_m - b_i).$$
Note that $f_1, \dots, f_r$ are linearly independent and as a consequence $\ev (f_1), \dots, \ev (f_r)$ span an $r$ dimensional subspace, say $E_r$ of $\C_m$. It follows that
$$d_r \le |\supp (E_r)| = \P (q, m) - |Z_{S, D} (f_1, \dots, f_r)|,$$
where, as usual, for any subspace $V \subset \Fq^n$,  $$\supp (V) = \{i \mid \exists (a_1, \dots, a_n) \in V, a_i \neq 0\}.$$
Now, an element $(a_1, \dots, a_m) \in Z_{S, D} (f_1, \dots, f_r)$ if and only if for each $i = 1, \dots, r$, there exists $j_i \in \{1, \dots, m\}$ such that $a_{j_i} = b_i$. Hence the unordered $m$-tuple (set) $\{a_1,\cdots,a_m\}$ must be chosen in such a way that it contains $\{b_1,\cdots, b_r\}$. The remaining $m-r$ elements can be chosen arbitrarily among those $q-r$ elements not among the $b_i$. This gives ${q - r \choose m-r}$ unordered sets $\{a_1,\cdots,a_m\}$, and, at last, exactly $m!{q - r \choose m-r}$ ordered $m$-tuples contained  in $Z_{S, D} (f_1, \dots, f_r).$

\end{proof}

\begin{rem} \label{equalto}\normalfont
We note that the determination of the $r$-th generalized Hamming weight of $\C_m$ (resp. $\C'_m$) is equivalent to computing the maximum number of common zeroes of $r$ linearly independent elements of $\Sigma_m$ in $\A_D^m (\Fq)$ (resp. $R_D$). It follows trivially that $d_r(\C_m)=m!d_r(\C'_m)$. The following corollary is now immediate:
\end{rem}
 
\begin{cor} \label{second}
$d_r(\C'_m) \le {q \choose m}-{q-r \choose  m-r}.$
\end{cor}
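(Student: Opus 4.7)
The plan is to obtain this corollary as an immediate consequence of Proposition \ref{first}, using the length--preserving passage from $\C_m$ to $\C'_m$ recorded in Remark \ref{equalto}. The central observation is that every polynomial in $\SS_m$ is symmetric, so it takes a constant value on each $S_m$-orbit of $\AA_D^m$; consequently, if we group the coordinates of $\C_m$ into blocks of size $m!$ according to the $S_m$-orbits (with one representative per orbit corresponding to a coordinate of $\C'_m$), then for every linear subspace $W \subseteq \SS_m$ the set $\supp(\ev(W)) \subset \{1,\dots,n\}$ in $\C_m$ is exactly the disjoint union of the $m!$-blocks sitting over $\supp(\ev(W)) \subset \{1,\dots,N\}$ in $\C'_m$. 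In particular $|\supp(\ev(W))_{\C_m}| = m! \cdot |\supp(\ev(W))_{\C'_m}|$, which yields the identity $d_r(\C_m) = m! \cdot d_r(\C'_m)$ by minimising over all $r$-dimensional $W$.

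From here the computation is a one--line verification. Proposition \ref{first} provides
$$d_r(\C_m) \le \P(q,m) - m!\binom{q-r}{m-r}.$$
Since $\P(q,m) = m!\binom{q}{m}$ by the definition of $\P$, dividing both sides by $m!$ and using the identity above gives
$$d_r(\C'_m) = \frac{d_r(\C_m)}{m!} \le \binom{q}{m} - \binom{q-r}{m-r},$$
which is the claimed bound.

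There is no substantive obstacle: all of the geometric content — exhibiting the $r$-dimensional subspace spanned by the Type I polynomials $f_i = \prod_{j=1}^m (x_j - b_i)$ for distinct $b_1,\dots,b_r \in \Fq$, and counting its common distinguished zeroes as ordered $m$-tuples containing the set $\{b_1,\dots,b_r\}$ — was already carried out in the proof of Proposition \ref{first}. The only point worth emphasising when writing it out is the justification of the normalisation $d_r(\C_m) = m!\,d_r(\C'_m)$, which is exactly the symmetry argument sketched above.
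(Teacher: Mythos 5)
Your proposal is correct and follows exactly the paper's route: the paper derives the corollary as immediate from Proposition \ref{first} together with the relation $d_r(\C_m)=m!\,d_r(\C'_m)$ stated in Remark \ref{equalto}, which is precisely your argument. Your explicit justification of that relation via the block structure of $S_m$-orbits is a welcome elaboration of what the paper dismisses as trivial, but it is not a different approach.
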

 
The following proposition shows that the bounds obtained in Proposition \ref{first} is exact for the largest two values of $r$. 
\begin{prop} \label{uppercases}
We have 
\begin{enumerate}
\item[(a)] $d_{m+1}(\C_m)=m!d_{m+1}(\C'_m)=m!{q \choose m}$.
\item[(b)] $d_{m}(\C_m)=m!d_{m}(\C'_m)= m!\left({q \choose m}-1\right).$
\end{enumerate}
\end{prop}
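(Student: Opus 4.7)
The plan is to dispatch (a) as a near-triviality and then reduce (b) to a simple duality plus symmetric-function argument, with the upper bound for (b) already supplied by Proposition \ref{first}.

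For (a): since $\dim \C_m = m+1$ by Proposition \ref{bigcode}, the only subspace of $\C_m$ of dimension $m+1$ is $\C_m$ itself, and nondegeneracy (also from Proposition \ref{bigcode}) gives $|\supp(\C_m)| = \P(q,m) = m!\binom{q}{m}$. Combined with Remark \ref{equalto}, this yields both claimed equalities.

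For (b), the upper bound $d_m(\C_m) \le m!\left(\binom{q}{m} - 1\right)$ is the $r=m$ case of Proposition \ref{first}, since $\binom{q-m}{0} = 1$. For the matching lower bound I must show that every $m$-dimensional subspace $V \subset \SS_m$ admits at most $m!$ common distinguished zeros in $\AA_D(\Fq)^m$. Identifying $\SS_m$ with $\Fq^{m+1}$ via the basis $\sigma_m^0, \ldots, \sigma_m^m$, the evaluation at $P = (a_1, \dots, a_m) \in \AA_D(\Fq)^m$ is the functional $\ev_P \in \SS_m^*$ with coordinates $(1, \sigma_m^1(P), \dots, \sigma_m^m(P))$ in the dual basis. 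The point $P$ is a common zero of every element of $V$ iff $\ev_P$ lies in the annihilator $V^\perp$. Since $V$ has codimension $1$ in $\SS_m$, the annihilator $V^\perp$ is $1$-dimensional; let $\phi = (c_0, c_1, \dots, c_m)$ generate it. The equation $\ev_P = \lambda \phi$ forces $\lambda c_0 = 1$ (from the $\sigma_m^0$-coordinate), hence $c_0 \neq 0$ and $\lambda = c_0^{-1}$ is uniquely determined; consequently each $\sigma_m^i(P)$ is pinned to the value $c_i/c_0$. Thus the unordered multiset $\{a_1, \dots, a_m\}$ is the root set (in $\Fq$) of the fixed monic polynomial
$$
t^m - \tfrac{c_1}{c_0} t^{m-1} + \tfrac{c_2}{c_0} t^{m-2} - \cdots + (-1)^m \tfrac{c_m}{c_0} \in \Fq[t],
$$
and since the coordinates of a distinguished point are pairwise distinct, there are at most $m!$ ordered tuples $P$ arising as a permutation of these roots. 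This gives $|\supp(V)| \ge \P(q,m) - m! = m!\left(\binom{q}{m} - 1\right)$, matching the upper bound. Invoking Remark \ref{equalto} once more yields the identity $d_m(\C_m) = m!\, d_m(\C'_m)$.

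The main (and essentially only) conceptual step is the duality observation that a codimension-$1$ subspace of $\SS_m$ has a $1$-dimensional annihilator; after that, the argument is just the fundamental theorem on elementary symmetric polynomials. I do not expect any serious obstacle beyond being careful that the $\sigma_m^0$-coordinate of every $\ev_P$ equals $1$, which is what rules out $c_0 = 0$ and forces the scalar $\lambda$ to be the same for every common zero.
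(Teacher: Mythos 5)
Your proof is correct, and for part (b) it takes a genuinely different route from the paper. The paper works with $\C'_m$ and argues via Wei duality: it shows that no column of a generator matrix is zero and no two columns are parallel (because the elementary symmetric functions separate $S_m$-orbits, by unique factorization in $\Fq[X]$), so the dual code has minimum distance at least $3$, which forces $d_m(\C'_m)=\binom{q}{m}-1$ with no separate upper-bound construction needed. You instead prove the two inequalities directly: the upper bound is the $r=m$ case of Proposition \ref{first}, and for the lower bound you observe that a codimension-$1$ subspace $V\subset\SS_m$ has a $1$-dimensional annihilator, that every evaluation functional $\ev_P$ has $\sigma_m^0$-coordinate equal to $1$, and hence that all common distinguished zeros of $V$ share the same values of $\sigma_m^1,\dots,\sigma_m^m$ and so form the root set of a single monic polynomial, giving at most $m!$ ordered tuples (one $S_m$-orbit). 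The two arguments rest on the same core fact --- the evaluation functionals of distinct orbits are pairwise non-parallel, i.e.\ the elementary symmetric functions separate orbits --- but your version is self-contained and makes the geometric content (common zeros of a hyperplane of $\SS_m$ lie in one orbit) explicit, at the cost of needing the explicit achieving subspace from Proposition \ref{first}; the paper's version gets equality in one stroke from Wei duality but imports that theorem as a black box. Both are complete; your handling of the degenerate possibility $c_0=0$ (no common zeros at all, so the bound holds vacuously) is the only delicate point and you address it correctly.
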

\begin{proof}
Part (a) follows trivially since $(1, \dots, 1) \in \C_m$ and hence $\C_m$ is a nondegenerate code. We prove the part (b) for the code $\C'_m$. 

A generator matrix for $\C'_m$ is a parity check matrix for its dual code. Such a matrix $M=(m_{i,j})$ can be formed by setting $m_{i,j}=$ the value of $\sigma_{i-1}$ at point number $j$ in $\mathbb{R}_{\Fq}$, for some fixed order of the points in $\mathbb{R}_{\Fq}$.
Another way to put it is that $m_{i,j}=$ the value of $\sigma_{i-1}$ at a  chosen point in orbit number $j$ of $S_m$ in $\AA_D(\Fq)^m$, for some fixed order of the orbits in $\AA_D(\Fq)^m$.
Any two columns of this matrix are equal if and only if they are equal up to a non-zero, multiplicative constant. This is because their first entries are both $1(=\sigma_0).$ The last observation immediately shows that no column of $M$ is zero. Moreover any two columns are different.
This is because the elementary, symmetric functions $\sigma_1,\cdots,\sigma_m$ separate orbits of $S_m$ on $\AA_D(\Fq)^m$. (If 
$$X^m-\sigma_1 X^{m-1}+\cdots+(-1)^m\sigma_m=(X-\alpha_1)(X-\alpha_2)\cdots(X-\alpha_m),$$ 
then the $\alpha_i$ are unique up to order, since $\Fq[X]$ is a UFD). Hence no two columns are parallel vectors either (i.e. no two columns are equivalent up to a non-zero multiplicative constant). Hence the minimum distance of the dual code of $\C'_m$ is at least $3$. By Wei duality $$d_{m}(\C'_m)=\text{ length }(\C'_m)-1={q \choose m}-1.$$
This completes the proof.
\end{proof}
Propositions \ref{risone} and \ref{uppercases} give all $3$ generalized Hamming weights $d_r$ for $\C'_m$ and $\C_m$ for the  case $m=2$. If $m=q-1$, then 
$\C'_m$ fills the whole ambient space $\Fq^q$, and everything is trivial. It is a challenge, though, to give good results in the intermediate cases $3 \le m \le m-2$.

\section{The case $m=2$.} \label{mtwo}

As it is clear from the work done in previous sections, we are interested in computing the basic parameters such as length, dimension, minimum distance, generalized Hamming weights and the weight distributions for the codes $\C_m$ and $\C'_m$. In this section, we completely determine these parameters for the codes when $m = 2$. To begin with, we derive from Proposition \ref{bigcode} that $\C_m$ is an $[n, k, d]$ code, where 
$$n = q (q-1), \ \ \ k = 3, \ \ \  \text{and} \ \ \ d = (q-1)(q-2).$$ 
Furthermore, it follows from Propositions \ref{risone} and  \ref{uppercases} that 
$$(d_1, d_2,d_3)=\left((q-1)(q-2), q(q-1) - 2, q(q-1)\right),$$
where $d_1, d_2, d_3$ denote the first, second and third generalized Hamming weights for the code $\C_2$.  
We now proceed to determine the weight distribution for the code $\C_2$. To this end we introduce the following notation: 

\begin{defn}\normalfont
Let $w$ and $r$ be integers satisfying $0 \le w \le q(q-1)$ and $1 \le r \le 3$. Define
\begin{enumerate}
\item[(a)] $A_w :=$ the number of codewords of $\C_2$ of Hamming weight $w$. 
\item[(b)]  $A_w^{(r)}:=$ the number of $r$-dimensional subcodes of $\C_2$ of support weight $w$.
\end{enumerate}
\end{defn}

Let $c \in \C_2$ be a codeword. Then $c = \ev (f)$ for some $f \in \Sigma_2$. It follows that $c$ is a codeword of Hamming weight $w$ if and only if $|Z_{\Fq, D} (f)| = q(q-1) - w$. One can now readily compute the values of $A_w$ from Tables \ref{tab1} and \ref{tab2} for all values of $w$.  We have the following results:
  
\begin{prop} \label{oddcase}
If $q$ is odd, and $q\ge 5$, then we have
\begin{equation*}
A_w=
\begin{cases}
1,  \ \ \ \text{if} \ w = 0 \\
q(q-1), \ \  \text{if} \   w = (q-1)(q-2) \\
\frac{q(q-1)(q+1)}{2}, \ \  \text{if} \  w = q(q-1) - (q-1) \\
\frac{q(q-1)^2}{2}, \ \ \text{if} \ w = q(q-1) - (q-3) \\
(q-1), \ \ \text{if} \ w = q(q-1) \\
0, \ \  \  \ \ \text{otherwise}.
\end{cases}
\end{equation*}
\end{prop}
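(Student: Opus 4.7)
The plan is to deduce the weight distribution directly from the enumeration of polynomials by their number of distinguished zeroes, which is already carried out in Theorem \ref{qodd} and tabulated in Table \ref{tab1}. The key point is that the evaluation map $\ev : \SS_2 \to \Fq^{q(q-1)}$ is injective. This is established inside the proof of Proposition \ref{bigcode}: if $\ev(f) = 0$, then $|Z_{\Fq, D}(f)| = \P(q,2)$, which for $m=2 < q$ forces $f = 0$ by Corollary \ref{cor1}. Hence distinct symmetric polynomials in $\SS_2$ yield distinct codewords in $\C_2$, and counting codewords of a given weight reduces to counting polynomials in $\SS_2$ with a prescribed number of distinguished zeroes.

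Next I would invoke the basic identity
\begin{equation*}
\mathrm{wt}(\ev(f)) \;=\; q(q-1) - |Z_{\Fq, D}(f)|,
\end{equation*}
which follows from the definition of the Hamming weight together with the fact that $|\AA_D(\Fq)^2| = \P(q,2) = q(q-1)$. Consequently, $A_w$ is exactly the number of $f \in \SS_2$ with $|Z_{\Fq, D}(f)| = q(q-1) - w$. For $q$ odd, Theorem \ref{qodd} shows that $|Z_{\Fq, D}(f)|$ takes only the five values $0$, $q-3$, $q-1$, $2(q-1)$, $q(q-1)$, so $A_w = 0$ for all other values of $w$. The restriction $q \ge 5$ guarantees that these five numbers are pairwise distinct (so no collapse of weight classes occurs); in particular $q-3 > 0$, which fails when $q = 3$.

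Finally, I would read off the multiplicities from Remark \ref{rem1} and Table \ref{tab1}: there is exactly one polynomial (the zero polynomial) with $q(q-1)$ distinguished zeroes, giving $A_0 = 1$; the $q-1$ nonzero constants give $A_{q(q-1)} = q-1$; the $q(q-1)$ Type I polynomials $c\prod_{i=1}^2(x_i - b)$ give $A_{(q-1)(q-2)} = q(q-1)$; the $q(q-1)(q+1)/2$ polynomials corresponding to $S_1 \cup S_2$ in Remark \ref{rem1} give $A_{q(q-1)-(q-1)} = q(q-1)(q+1)/2$; and the $q(q-1)^2/2$ polynomials with $a_2 \neq 0$ and $\D$ a nonzero square give $A_{q(q-1)-(q-3)} = q(q-1)^2/2$. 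Matching the weights $w = q(q-1) - |Z_{\Fq,D}(f)|$ with the claimed list completes the proof.

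There is no real obstacle: all the combinatorial work has already been done in Section 2, and the only genuinely new input is the injectivity of $\ev$, which is already in hand. The mild cosmetic point that needs to be checked is that the five weights appearing in the statement are indeed distinct for $q \ge 5$, so that the cases of Theorem \ref{qodd} translate into five separate rows of the weight distribution rather than being merged.
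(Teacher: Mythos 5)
Your proposal is correct and is essentially the paper's own argument: the paper gives no separate proof of this proposition, stating only that the values of $A_w$ can be read off from Table \ref{tab1} via the identity $\mathrm{wt}(\ev(f)) = q(q-1) - |Z_{\Fq,D}(f)|$ and the injectivity of $\ev$ established in Proposition \ref{bigcode}. Your additional remark that $q \ge 5$ is needed to keep the five weight classes distinct (since $q-3=0$ when $q=3$) correctly accounts for the hypothesis and matches the paper's separate treatment of $q=3$ after the proposition.
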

We remark that for $q=3$, we have $A_0 = 1,  A_2 = 6, \ A_4 = 12$ and $A_6 =8$.

\begin{prop} \label{evencase}
If $q$ is even, and $q \ge 4$, then we have
\begin{equation*}
A_w=
\begin{cases}
1,  \ \ \ \text{if} \ w = 0 \\
q(q-1), \ \  \text{if} \   w = (q-1)(q-2) \\
q(q-1)^2, \ \  \text{if} \  w = q(q-1) - (q-2) \\
(q-1)^2, \ \ \text{if} \ w = q(q-1) - 1 \\
2(q-1), \ \ \text{if} \ w = q(q-1) \\
0, \ \  \  \ \ \text{otherwise}.
\end{cases}
\end{equation*}
\end{prop}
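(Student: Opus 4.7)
The plan is to deduce Proposition \ref{evencase} as an immediate consequence of Theorem \ref{qeven} (and the enumeration summarized in Table \ref{tab2}) via the injectivity of the evaluation map. To begin, I would invoke Proposition \ref{bigcode} to observe that since $q \ge 4 > m = 2$, the map $\ev \colon \SS_2 \to \Fq^{q(q-1)}$ is injective, so each codeword $c \in \C_2$ has the form $c = \ev(f)$ for a unique $f \in \SS_2$, and its Hamming weight equals $q(q-1) - |Z_{\Fq,D}(f)|$. Therefore $A_w$ equals the number of $f \in \SS_2$ with $|Z_{\Fq,D}(f)| = q(q-1) - w$, and determining the weight distribution reduces to counting the symmetric polynomials having each possible number of distinguished zeroes.

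Next I would invoke Theorem \ref{qeven}, which asserts that for $q \ge 4$ even the quantity $|Z_{\Fq,D}(f)|$ takes one of exactly five values --- namely $0$, $q-2$, $q$, $2(q-1)$, and $q(q-1)$ --- and describes precisely which triples $(a_0,a_1,a_2)$ produce each value. Translating these into weights yields $q(q-1)$, $q(q-1)-(q-2)$, $q(q-1)-q$, $(q-1)(q-2)$, and $0$ respectively. The multiplicities $2(q-1)$, $q(q-1)^2$, $(q-1)^2$, $q(q-1)$, and $1$ come from the direct enumeration already carried out for Table \ref{tab2}: the Type I polynomials $c(x_1-b)(x_2-b)$ are parameterized by $(c,b)\in \Fq^* \times \Fq$, giving $q(q-1)$; the class $a_2 \ne 0$, $\Delta \ne 0$ gives $q(q-1)$ choices of $(a_1,a_2)$ together with $q-1$ choices of $a_0$ ensuring $\Delta \ne 0$; and the remaining three classes are immediate.

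To conclude I would check that the five weights above are pairwise distinct whenever $q \ge 4$ is even (a routine arithmetic check), which ensures that no two classes contribute to the same $A_w$ and that all other $A_w$ vanish. A reassuring cross-check is that the five multiplicities sum to $1 + q(q-1) + q(q-1)^2 + (q-1)^2 + 2(q-1) = q^3 = |\SS_2|$, so every symmetric polynomial is accounted for exactly once. The only substantive input is Theorem \ref{qeven}; beyond that, the argument is bookkeeping, and I do not foresee any genuine obstacle.
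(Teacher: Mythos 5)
Your proposal is correct and follows essentially the same route as the paper, which likewise obtains Proposition \ref{evencase} by combining the injectivity of $\ev$ (so that weight $w$ corresponds to $q(q-1)-w$ distinguished zeroes) with the case analysis of Theorem \ref{qeven} and the counts in Table \ref{tab2}; your closing check that the multiplicities sum to $q^3$ is a nice addition. One point worth flagging: your translation correctly yields weight $q(q-1)-q$ for the $(q-1)^2$ polynomials with $a_0a_1\neq 0$ and $a_2=0$, whereas the statement lists this class under $w=q(q-1)-1$; since every $|Z_{\Fq,D}(f)|$ is divisible by $2!$, an odd weight such as $q(q-1)-1$ is impossible, so the stated value is a typo and should read $w=q(q-1)-q$. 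Your argument proves the corrected statement, but you should say explicitly that this is what you are doing rather than silently substituting the right value.
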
 

We now turn our attention towards computing $A_w^{(i)}$-s for all values of $1 \le w \le q(q-1)$ and $i = 1, 2, 3$ for the code $\C_2$. To this end, we have the following result:

\begin{prop}
For $1 \le w \le q(q-1)$ and $i = 1, 2, 3$ we have
\begin{equation*}
A_w^{(i)}=
\begin{cases}
\frac{A_w}{q-1},  \ \ \ \text{if} \ \ \ i =1 \\
\frac{q(q-1)}{2}, \ \  \text{if} \ \ \   w = q(q-1) - 2 \ \text{and} \ i=2 \\
\frac{q^2 + 3q + 2}{2}, \ \  \text{if} \  \ \ w = q(q-1) \ \text{and} \ i = 2 \\
1, \ \ \text{if} \ \ \ w = q(q-1) \ \text{and} \ i =3, \\
0, \ \ \text{otherwise}.
\end{cases}
\end{equation*}
\end{prop}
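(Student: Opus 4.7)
The plan is to treat the three cases $i=1,2,3$ in turn. For $i=1$, every one-dimensional subcode of $\C_2$ is spanned by a single nonzero codeword $c$; its support coincides with $\supp(c)$ and it contains exactly $q-1$ nonzero codewords, all of the same Hamming weight $w$. Thus $A_w^{(1)}=A_w/(q-1)$ is immediate. For $i=3$, the only three-dimensional subcode of $\C_2$ is $\C_2$ itself, and since $\ev(1)=(1,\dots,1)$ it has full support, giving $A_{q(q-1)}^{(3)}=1$ and $A_w^{(3)}=0$ for $w<q(q-1)$.

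The main content is $i=2$. The approach is via duality: because $\ev:\SS_2\to\C_2$ is a linear isomorphism, a two-dimensional subcode corresponds to a two-dimensional subspace $W\subset\SS_2$, and $W$ is the kernel of a nonzero linear functional on $\SS_2$, unique up to scalar. Hence two-dimensional subcodes are in bijection with the $q^2+q+1$ points of the projective plane $\PP(\SS_2\v)\cong\PP^2(\Fq)$. In the basis $\{1,\sigma_2^1,\sigma_2^2\}$ of $\SS_2$, evaluation at a distinguished point $P=(x_1,x_2)$ is the functional represented by the triple $(1,x_1+x_2,x_1x_2)$, and $P$ is a common zero of every $f\in W$ precisely when the projective class of this triple coincides with the class corresponding to $W^{\perp}$.

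The key observation is that the assignment $\{x_1,x_2\}\mapsto[1:x_1+x_2:x_1x_2]$ is an injection from the set of $S_2$-orbits of $\AA_D^2(\Fq)$ into $\PP^2(\Fq)$, because the unordered pair $\{x_1,x_2\}$ can be recovered as the roots of $t^2-(x_1+x_2)t+x_1x_2$ in the UFD $\Fq[t]$ (this is the same argument used in Proposition \ref{uppercases}). Since $W^{\perp}$ is one-dimensional, it can contain the image of at most one such $S_2$-orbit, so $|Z_{\Fq,D}(W)|\in\{0,2\}$ for every two-dimensional $W$. Consequently the support weight of a two-dimensional subcode is either $q(q-1)$ or $q(q-1)-2$, with no intermediate value possible.

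For the counting, the image of the above injection has cardinality ${q \choose 2}=q(q-1)/2$, so exactly $q(q-1)/2$ points of $\PP^2(\Fq)$ correspond to subspaces with two common distinguished zeros; this yields $A_{q(q-1)-2}^{(2)}=q(q-1)/2$. The remaining $(q^2+q+1)-q(q-1)/2=(q^2+3q+2)/2$ points give subspaces of full support, so $A_{q(q-1)}^{(2)}=(q^2+3q+2)/2$. The most delicate step I expect is establishing the dichotomy $|Z_{\Fq,D}(W)|\in\{0,2\}$ via the injectivity argument; once that is in place, the final counts drop out immediately from the total number of points in $\PP^2(\Fq)$.
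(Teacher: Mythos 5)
Your argument is correct, and for the main case $i=2$ it takes a genuinely different route from the paper. The paper works projectively on the \emph{curve} side: it checks that two linearly independent $f_1,f_2\in\Sigma_2$ have no common component, applies Bezout's theorem to their degree-$2$ projective closures to get at most $4$ common points counted with multiplicity, subtracts the two common points $[0{:}1{:}0]$ and $[1{:}0{:}0]$ at infinity, and then uses the $S_2$-symmetry to conclude that the number of common distinguished zeros is $0$ or $2$; the count $A^{(2)}_{q(q-1)-2}=q(q-1)/2$ then comes from choosing the orbit of common zeros, and the full-support count follows by subtraction from $q^2+q+1$. You instead dualize: you identify two-dimensional subcodes with points of $\PP(\SS_2\v)$, observe that a distinguished point $P$ is a common zero of $W$ exactly when $[\ev_P]=[W^{\perp}]$, and reduce the whole dichotomy $|Z_{\Fq,D}(W)|\in\{0,2\}$ to the injectivity of $\{x_1,x_2\}\mapsto[1:x_1+x_2:x_1x_2]$ on $S_2$-orbits, which follows from unique factorization of $t^2-(x_1+x_2)t+x_1x_2$ — the same separation-of-orbits observation the paper uses elsewhere (in the proof of its Proposition on $d_m$) for a different purpose. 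Your approach buys a proof that avoids Bezout, the no-common-component check, and the analysis at infinity entirely, and it makes the final counts transparent (image of the injection has $\binom{q}{2}$ points, complement has $(q^2+q+1)-\binom{q}{2}=(q^2+3q+2)/2$); the paper's approach is the more geometric one and generalizes in spirit to intersections of higher-degree symmetric hypersurfaces, where a single linear functional no longer controls the common zero locus. Both arguments are complete; your treatment of $i=1$ and $i=3$ matches the paper's (which simply declares those cases clear).
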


\begin{proof}
The assertions concerning the cases when $i =1$ and $i = 3$ are clear. To prove the claims concerning the cases when $i = 2$, we must analyze the possible number of distinguished points on the intersection of two curves given by $f_1, f_2 \in \Sigma_2$ such that $f_1$ and $f_2$ are linearly independent. Suppose that 
$$f_1 (x, y) = a_0 + a_1 (x + y) + a_2 xy \ \ \ \text{and} \ \ \ f_2 (x, y) = b_0 + b_1 (x + y) + b_2 xy.$$
We claim that $f_1$ and $f_2$ have no common factors. To see this, first note that, $f_1$ is not a nonzero constant multiple of $f_2$ since they are linearly independent. However, if $f_1$ has a factor of degree one, then $f_1 = c(x-a)(y-a)$ for some $a, c \in \Fq$. The fact that $f_1$ and $f_2$ have a common factor, now readily implies that $f_2 = d(x-a)(y-a)$ for some $d \in \Fq$. This is a contradiction. Now the projective closures of the zero sets $V(f_1)$ and $V(f_2)$ are given by homogeneous polynomials $F_1$ and $F_2$ of degree $2$, namely
$$F_1 = a_0 z^2 + a_1(x+y)z + a_2 xy \ \ \text{and} \ \ F_2 = b_0 z^2 + b_1(x+y)z + b_2 xy.$$
By Bezout's theorem, the projective curves given by $F_1$ and $F_2$ intersect at exactly $4$ points over the algebraic closure, counting multiplicities. We also observe that they have two points on the line $z=0$ in common, namely $[0:1:0]$ and $[1:0:0]$. Hence they have at most $2$ points in common in the affine space $\AA^2(\Fq).$
To this end, we observe that if $V(f_1)$ and $V(f_2)$ have points in common in $\AA^2_D(\Fq)$, then by symmetry, the points will be of the form $(\a,\beta)$ and $(\beta, \a)$ for some $\a, \beta \in \Fq$ with $\a \neq \beta$. 
Thus the affine curves $V(f_1)$ and $V(f_2)$ either do not intersect in $\AA^2_D(\Fq)$ or they intersect in exactly $2$ points. It is thus evident that $A_w^{(2)} = 0$ for all values of $w$ other that $q(q-1)$ and $q(q-1) - 2$. We now compute $A_w^{(2)}$ for $w = q(q-1)$ and $w = q(q-1) - 2$. It is enough to compute the same for $w = q(q-1) -2$.   
The elements of $\Sigma_2$ that contain the above points form a $2$ dimensional linear system of curves, which in turn gives us a two dimensional subcode that has weight $q(q-1) - 2$. On the other hand, there are $q(q-1)/2$ ways of choosing two such points from $\AA^2(\Fq)$. This shows that $A_{q(q-1) - 2}^{(2)} = q(q-1)/2$. Since there are a total of $q^2 + q + 1$ number of $2$ dimensional subcodes of $\C_2$, the assertion on $A_{q(q-1)}^{(2)}$ follows trivially. 
\end{proof}

Let $(\C_2)^{(s)}=\C_2 \otimes_{\mathbb{F}_q} \mathbb{F}_{q^s}$ for $s \geqslant 1$. It is a linear code over $\mathbb{F}_{Q}$, for $Q=q^s, $ with the same generator matrix as $\C_2$ itself.
In~\cite{J}, one gives a relation between the higher weight spectra of a linear code and the usual weight spectrum (only counting individual words of each weight), for such  an extension code over larger, finite fields. Denote the number of codewords of weight $w$ for $(\C_2)^{(s)}$
by  $P_w(Q).$ Then:

\[P_w(Q) = \sum_{r=0}^k A_w^{(r)} \prod_{i=0}^{r-1}(q^s-q^i)= \sum_{r=0}^k A_w^{(r)} \prod_{i=0}^{r-1}(Q-q^i).\]
This gives:
\begin{cor}
For $(\C_2)^{(s)}$ we have, if $q\ge 7$ is odd :
$$P_0(Q)=1,\textrm{ }P_{n-2(q-1)}(Q)=q(Q-1),\textrm{ }P_{n -(q-1)}(Q)=\frac{q^2+q}{2}(Q-1),\textrm{ }$$
$$P_{n-(q-3)}(Q)=\frac{q^2-q}{2}(Q-1),\textrm{ }P_{n-2}(Q)=\frac{q^2-q}{2}(Q-1)(Q-q),\textrm{ }$$
$$P_n(Q)=(Q-1)(Q^2+\frac{-q^2+q+2}{2}Q+\frac{q^3-3q^2-2q+2}{2}).$$
\end{cor}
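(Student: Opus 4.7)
The strategy is direct: for each weight $w$ that appears, substitute the known values of $A_w^{(r)}$ into the formula
\[
P_w(Q) = \sum_{r=0}^{k} A_w^{(r)} \prod_{i=0}^{r-1}(Q-q^i)
\]
and simplify. The previous proposition already identifies, for odd $q \ge 5$, every triple $(w,r)$ with $r \in \{1,2,3\}$ such that $A_w^{(r)} \ne 0$. Combined with the trivial term $A_0^{(0)} = 1$ (and $A_w^{(0)} = 0$ for $w > 0$), this determines which weights $w$ can contribute a nonzero $P_w(Q)$: namely $w = 0$, $w = n - 2(q-1)$, $w = n - (q-1)$, $w = n - (q-3)$, $w = n - 2$, and $w = n$. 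So first I would list these possibilities and note that all other $P_w(Q)$ vanish.

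Next, for the four weights $w \in \{n-2(q-1),\, n-(q-1),\, n-(q-3)\}$, only $r=1$ contributes, since for these values $A_w^{(2)} = A_w^{(3)} = 0$. Using $A_w^{(1)} = A_w/(q-1)$, the table in Proposition~\ref{oddcase} gives $A_w^{(1)}$ equal to $q$, $q(q+1)/2$, and $q(q-1)/2$ respectively, and each $P_w(Q)$ is obtained just by multiplying by $(Q-1)$. Similarly, for $w = n-2$ only $r=2$ contributes with $A_{n-2}^{(2)} = q(q-1)/2$, giving $P_{n-2}(Q) = \tfrac{q^2-q}{2}(Q-1)(Q-q)$.

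The only weight where all three $r \in \{1,2,3\}$ contribute simultaneously is $w = n$, and this is the step where any real algebra happens. Here $A_n^{(1)} = 1$, $A_n^{(2)} = (q^2+3q+2)/2$, and $A_n^{(3)} = 1$, so
\[
P_n(Q) = (Q-1) + \tfrac{q^2+3q+2}{2}(Q-1)(Q-q) + (Q-1)(Q-q)(Q-q^2).
\]
I would factor out $(Q-1)$, expand $(Q-q)(Q-q^2) = Q^2 - (q+q^2)Q + q^3$, and collect coefficients of $Q^2$, $Q^1$, $Q^0$. Matching against the claimed
\[
(Q-1)\!\left(Q^2 + \tfrac{-q^2+q+2}{2}Q + \tfrac{q^3-3q^2-2q+2}{2}\right)
\]
reduces to checking three linear identities in $q$, which is routine.

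The main obstacle here is not conceptual but organizational: one must trust the classification in Proposition~\ref{oddcase} and the subcode counts $A_w^{(r)}$, be careful that $A_w^{(0)}$ only contributes at $w=0$, and keep track of signs and denominators in the $r=2,3$ contribution at $w=n$. Once that bookkeeping is in place, the corollary is a mechanical substitution into the formula from~\cite{J}.
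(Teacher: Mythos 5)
Your proposal is correct and is essentially the paper's own (implicit) argument: the corollary is stated as a direct consequence of the displayed formula $P_w(Q)=\sum_r A_w^{(r)}\prod_{i=0}^{r-1}(Q-q^i)$, and your substitution of the $A_w^{(r)}$ values, including the expansion at $w=n$, checks out. The only point worth making explicit is that your claim ``for $w=n-2$ only $r=2$ contributes'' uses $A_{n-2}=0$, which is precisely where the hypothesis $q\ge 7$ enters (for $q=5$ one has $n-2=n-(q-3)$ and the two weights collide).
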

 
We leave it to the reader to find analogous formulas for $q=3,5$, and for even $q \ge 4$.

\section{The case $m=3$.} \label{mthree}
If $m=3$, the only unknown generalized Hamming weight of $\C_m=\C_3$ is $d_2$, since we know that $d_1=6{q \choose\ 3} - 6{q-1 \choose 2}, d_3=6{q \choose 3}-6, \text{ and } d_4=6{q \choose 3},$ by the results above.
Furthermore we know from Proposition \ref{basic}, as for general $m$,  that there are precisely $q(q-1)$ codewords of minimal weight, namely ev$(c(x_1-b)\cdots(x_m-b))$, for $q$ choices of $b$, and $q-1$ choices of non-zero $c$. 

The case $q=4$ is a special case of the trivial case $m=q-1$ mentioned at the end of Section 4, with $d_r(\C_3)=6r$ for all $r$.

In order to illustrate the complexity, we give our only example below, of a more non-trivial result for $m=3$. We leave it to further research to find good results for $m \ge 3$ in general.

\begin{example}
{\rm If $q=5$, one can show that $d_2(\C_3)=42$ 
(and hence $(d_1,d_2,d_3,d_4)=(24,42,54,60)$).

For practical calculations in order to show this it will be more convenient to work with the smaller code $\C'_3$
and  use the fact that 
$d_r(\C_r)=m!d_r(\C'_r)$, for all 
$m,r$ in question. Thus we will prove the equivalent statement $d_2(\C'_3)=\frac{42}{6}=7$:
The word length of $\C'_3$ is $n=\frac{60}{6}$, and a generator matrix $M=(m_{i,j})$ of $\C'_3$ with $m_{i,j}= \sigma_{3}^{i-1}(P_j)$ becomes
\setcounter{MaxMatrixCols}{10}
\[\begin{bmatrix} 
1 & 1 & 1 & 1 & 1 & 1 & 1 & 1 & 1 & 1 \\
3 & 4 & 0 & 0 & 1 & 2 & 1 & 2 & 3 & 4 \\
2 & 3 & 4 & 1 & 3 & 2 & 1 & 4 & 4 & 1 \\
0 & 0 & 0 & 0 & 0 & 0 & 1 & 3 & 2 & 4 

\end{bmatrix}\]
Here the $P_j=\{a_j,b_j,c_j\} \subset \{0,1,2,3,4\}$ are ordered lexicographically with respect to the ordering of ${\mathbb F}_5$, and they represent the $10=\frac{60}{6}$ orbits of 
$S_3$ in $\mathbb{A}_D(\mathbb{F}_5)^3$. 
Hence $P_1=(0,1,2),P_2=(0,1,3),\cdots,P_{10}=(2,3,4)$, and the first column of $M$ becomes the transpose of:

$(\sigma_3^0(P_1),\sigma_3^1(P_1),\sigma_3^2(P_1), \sigma_3^3(P_1))=
(1,0+1+2,0\cdot 1+0 \cdot 2+1 \cdot 2, 0 \cdot 1 \cdot 2)=(1,3,2,0).$ The nine other columns are computed in the same way.

The columns of $M$ may be interpreted as points of projective $3$-space $Proj({\mathbb F}_5[y_0,y_1,y_2,y_3])=\mathbb{P}^3$ over ${\mathbb F}_5$. One then has the well known result (see for example \cite[p. 276]{HTV}):

\begin{lemma} \label{well}
If the columns of a $(k \times n)$-generator matrix of a linear code $C$ are interpreted as points of a projective space $\mathbb{P}^{k-1}$, then for all $r=1,\cdots,k$ we have 

$d_r(C)=n - m_r$, where $m_r$ is the maximal number of column points that are contained in a codimension $r$-space in $\mathbb{P}^{k-1}$.
\end{lemma}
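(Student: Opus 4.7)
The plan is to establish the standard projective duality between $r$-dimensional subcodes of $C$ and codimension-$r$ linear subspaces of $\mathbb{P}^{k-1}$, then read off the equality from this correspondence. Let $G$ be the generator matrix with columns $P_1,\dots,P_n$ (which we assume are non-zero, i.e.\ $C$ is non-degenerate; otherwise zero columns are never in any support and can be handled as a separate trivial case). Every codeword is of the form $c=xG$ for a unique $x\in\mathbb{F}_q^k$, and its $j$-th coordinate equals $\langle x,P_j\rangle$, viewing $x$ as a linear form on the ambient vector space. In particular $c_j=0$ iff the linear form $x$ vanishes at the point $P_j\in\mathbb{P}^{k-1}$.

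Next, I would exploit the fact that, since $G$ has rank $k$, the map $x\mapsto xG$ is an isomorphism $\mathbb{F}_q^k\to C$. Hence every $r$-dimensional subcode $D\subseteq C$ arises uniquely from an $r$-dimensional subspace $V\subseteq\mathbb{F}_q^k$, and vice versa. The key observation is:
\begin{equation*}
\{1,\dots,n\}\setminus\operatorname{supp}(D)=\{j:\langle x,P_j\rangle=0\text{ for all }x\in V\}=\{j:P_j\in L_V\},
\end{equation*}
where $L_V=\bigcap_{x\in V}H_x$ is the projective subspace cut out by the $r$-dimensional family of linear forms $V$. Since the forms in $V$ are linearly independent, $L_V$ has codimension exactly $r$ in $\mathbb{P}^{k-1}$.

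Conversely, to any codimension-$r$ projective subspace $L\subset\mathbb{P}^{k-1}$ one associates the $r$-dimensional space $V_L\subseteq\mathbb{F}_q^k$ of linear forms vanishing on $L$, and then the $r$-dimensional subcode $D_L=\{xG:x\in V_L\}$. The two assignments $D\mapsto L_V$ and $L\mapsto D_L$ are inverse to each other, giving a bijection between $r$-dimensional subcodes of $C$ and codimension-$r$ projective subspaces of $\mathbb{P}^{k-1}$. Under this bijection $|\operatorname{supp}(D)|=n-\#\{j:P_j\in L\}$, so minimizing the left-hand side over $D$ is the same as maximizing $\#\{j:P_j\in L\}$ over $L$, yielding $d_r(C)=n-m_r$.

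The only real subtlety is bookkeeping: one has to check that $L_V$ genuinely has codimension $r$ (not smaller), which is the content of linear independence of the forms in $V$, and that the count is of column indices rather than distinct projective points, so that coincidences among the $P_j$ do not cause trouble. Once these are noted, the lemma is essentially a direct translation between the codeword picture and the projective picture, and no further computation is needed.
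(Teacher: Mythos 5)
Your proof is correct, and it is worth noting that the paper itself gives no proof of this lemma at all: it is stated as a ``well known result'' with a citation to \cite[p.~276]{HTV}. The argument you give --- the bijection between $r$-dimensional subcodes $D\subseteq C$ and $r$-dimensional spaces of linear forms $V\subseteq \mathbb{F}_q^k$ via the rank-$k$ generator matrix, together with the observation that the complement of $\operatorname{supp}(D)$ is exactly the set of column indices $j$ with $P_j$ in the codimension-$r$ subspace cut out by $V$ --- is precisely the standard duality argument found in the cited literature, so you have in effect supplied the proof the paper outsources. Your two caveats (columns must be nonzero to be projective points, and $m_r$ counts column indices with multiplicity rather than distinct points) are exactly the right ones to flag, and both are consistent with how the lemma is applied in the paper, since there the columns of $M$ are pairwise non-proportional and nonzero by the argument in Proposition \ref{uppercases}.
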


One then understands that $d_1(\C'_3)=10$ - the maximal number of points from $P_1,\cdots,P_{10}$ that are contained in a (projective) plane $=10-6=4$, and $d_3(\C'_3)=10$ - the maximal number of points from $P_1,\cdots,P_{10}$ that are contained in a point $=10-1=9$ (obvious), since one already knows that $d_1(\C'_3)=\frac{24}{6}=4,$ and $d_3(\C'_3)=\frac{54}{6}=9.$ This can also be checked by inspecting the matrix $M$ directly. 

Our only new task is to show that the maximal number of points from $P_1,\cdots,P_{10}$ that are contained in a (projective) line is $3$, so that $d_2(\C'_3)=10-3=7$. 

Our first observation is that the first $3$ columns of $M$ are dependent (compute its upper left $3$-minor), so the maximal number of column points on a line is at least $3$. We will show that it is at most $3$.

For $q=5$ there are $q^3+q^2+q+1=156$ planes in $\mathbb{P}^3$, and among them we have the $5$ planes $W_j=0,$ for $j=0,\cdots,4$ that correspond to (totally) reducible elements 
$$ \prod_{i=1}^3 (x_i - j)$$ of $\SS_5$, and each of them contains ${q-1 \choose m-1}=6$ of the $10$ points. We see directly from the matrix description that $W_0=y_3=0$ is one of them.
By the last part of Theorem \ref{basic} all other planes contain at most $5$ column points from $M$. 
A computer analysis, or a by-hand calculation, shows
that there is no plane (corresponding to irreducible elements of $\SS_m$ that contain exactly $5$ column points (of $M$), although the last part of Theorem \ref{basic}, which is obviously not sharp in this case, allows it. Moreover there are exactly ${5 \choose\ 2}=10$ planes that contain exactly $4$ of  the $10$ points (For each unordered pair $W_i,W_j$, with $i \ne j$, there is exactly one such plane $V_{i,j}=0$ of type $=W_i+g_{i,j}W_j=0$, with $g_{i,j} \ne 0$).  

We will use this to prove that no more than $3$ of the (column) points (of $M$) are on a line. 
First, two planes $W_i=0$ and $W_j=0$, with $i \ne j$, are well known to intersect in exactly $3$ points, by the case $r=2$ of the proof of Proposition \ref{first}. 

If two distinct planes of type $V_{i,j}=0$ contained the same $4$ points, add any point outside $L$, which is the intersection of those two planes. Then those $5$ points would span a plane. This plane would have to be one of the $W_j=0$, since no other planes contains at least $5$ points, as we have seen.  But a plane of type $W_j=0$ does not contain $4$ points on a line, as we will now show, so that is impossible. 

The assertion that the plane of type $W_0=y_3=0$ does not contain $4$ points on a line, follows by direct inspection of the $6$ leftmost columns of $M$. One finds that the only dependent column triples are $(123), (145), (246), (356)$, and since no column quadruple contains two of these triples, we cannot have $4$ column points on a (projective) line.

Let $l \in \{1,2,3,4\}$. If instead we choose a generator matrix, with $m_{i,j}=\sigma_m^{i-1}(P_j-(l,l,l)))$, and order the $P_j$ lexicographically as 
$(l,l+1,l+2),(l,l+1,1+3),\cdots,(l-3,l-2,l-1),$
then this "new" matrix will in fact be equal to $M$. But now $y_3=0$ corresponds to the totally reducible element 
$$ \prod_{i=1}^3 (x_i - l)$$ of $\SS_5.$
and therefore there are not $4$ collinear column points in this plane, either,
for $l=1,2,3,4$. But in the initial matrix description this corresponds to the plane $W_l=0$,
and linear relations between columns do not change
when one performs row operations on a matrix.
Hence no plane $W_l$ contains $4$ collinear points.

If a plane of type $V_{i,j}=0$ and a plane of type $W_l=0$ contained $4$ common points (in their intersection, a line $L$), we again have four points in a plane of type $W_l=0$, a contradiction.

Hence the maximal number of column points of $M$ on a line is at most $3$ also, and hence $d_2(\C'_3) = {q \choose\ 3} -
{q-2 \choose\ 3-2} =10-3=7$, for $q=5$. As a byproduct of this analysis we observe that in addition to $q(q-1)=20$ words of minimal weight $4$ there are exactly 
$(q-1){q \choose\ 2}=40$ codewords of "subminimal" weight, in this case $6$, in $\C'_3$ for $q=5$.}
\end{example} 




\setcounter{MaxMatrixCols}{6}



\textbf{Acknowledgment:} The authors sincerely thank the anonymous referees for their careful reading the initial version of this article and providing us with some important suggestions towards improving the article. We also thank Hiram Lopez for their comments and suggestions on this article.

\vspace{.5cm}

\end{document}